\documentclass[11pt]{article}

\usepackage[margin=1in]{geometry}

\usepackage[T1]{fontenc}

\usepackage{graphicx}
\usepackage{color}
\usepackage[bookmarksdepth=2]{hyperref}

\usepackage{amsmath}
\usepackage{amssymb}

\usepackage{xspace} 
\usepackage{xcolor}
\usepackage{algorithm}
\usepackage[noend]{algpseudocode}
\usepackage{comment}

\usepackage{authblk}

\usepackage{lineno}

\usepackage{amsthm}
\theoremstyle{plain}
\newtheorem{theorem}{Theorem}[section]
\newtheorem{proposition}[theorem]{Proposition}
\newtheorem{lemma}[theorem]{Lemma}

\theoremstyle{definition}

\theoremstyle{remark}

\usepackage{tikz}
\usetikzlibrary {arrows.meta}

\newcommand{\calM}{\mathcal{M}}

\newcommand{\I}{\mathbb{I}}
\newcommand{\R}{\mathbb{R}}

\newcommand{\xstar}{x^\star}
\newcommand{\ystar}{y^\star}
\newcommand{\Mstar}{M^\star}

\newcommand{\xtilde}{\Tilde{x}}
\newcommand{\ytilde}{\Tilde{y}}
\newcommand{\ztilde}{\Tilde{z}}
\newcommand{\Gtilde}{\Tilde{G}}
\newcommand{\Mtilde}{\Tilde{M}}
\newcommand{\gammatilde}{\Tilde{\gamma}}

\newcommand{\fl}{\textsc{Facility Location}\xspace}
\newcommand{\ufl}{\textsc{Uncapacitated Facility Location}\xspace}
\newcommand{\flm}{\textsc{Facility Location with Matching}\xspace}
\newcommand{\ecfl}{\textsc{Even-constrained Facility Location}\xspace}
\newcommand{\FLM}{\textsc{FLM}\xspace}
\newcommand{\UFL}{\textsc{UFL}\xspace}
\newcommand{\ECFL}{\textsc{ECFL}\xspace}

\newcommand{\Reroute}{\textsc{Reroute}\xspace}
\newcommand{\open}{\mathrm{open}}
\newcommand{\connflm}{\mathrm{conn}_\textsc{FLM}}
\newcommand{\connufl}{\mathrm{conn}_\textsc{UFL}}
\newcommand{\PMM}{P_\mathrm{MM}}

\newcommand{\transfer}{\tau^i_{e' \to e}}
\newcommand{\rhoUFL}{\rho_\textsc{UFL}}

\DeclareMathOperator{\E}{\mathbb{E}}
\DeclareMathOperator{\argmin}{argmin}

\DeclareMathOperator{\conv}{conv}

\title{Servicing Matched Client Pairs with Facilities}
\date{}

\author[1]{Fateme Abbasi}
\author[1]{Martin Böhm}
\author[1]{Jarosław Byrka}
\author[2]{Matin Mohammadi\thanks{Work done while the author was a visiting intern at the University of Wrocław, Poland.}}
\author[1]{Yongho Shin}

\affil[1]{Institute of Computer Science, University of Wrocław, Poland}
\affil[2]{Department of Computer Engineering, Sharif University of Technology, Iran}

\begin{document}

\maketitle             
\begin{abstract}
We study \flm, a \textsc{Facility Location} problem where, given additional information about which pair of clients is compatible to be matched, we need to match as many clients as possible and assign each matched client pair to a same open facility at minimum total cost.
The problem is motivated by match-making services relevant in, for example, video games or social apps. It naturally generalizes two prominent combinatorial optimization problems---\ufl and \textsc{Minimum-cost Maximum Matching}. \flm also generalizes the \ecfl problem studied by Kim, Shin, and An (Algorithmica 2023).

We propose a linear programming (LP) relaxation for this problem, and present a $3.868$-approximation algorithm. Our algorithm leverages the work on bifactor-approximation algorithms (Byrka and Aardal, SICOMP 2012); our main technical contribution is a rerouting subroutine that reroutes a fractional solution to be supported on a fixed maximum matching with only small additional cost.
For a special case where all clients are matched, we provide a refined algorithm achieving an approximation ratio of $2.218$. As our algorithms are based on rounding an optimal solution to the LP relaxation, these approximation results also give the same upper bounds on the integrality gap of the relaxation.
\end{abstract}

\section{Introduction} \label{sec:intro}
\fl is a quintessential problem in combinatorial optimization with wide-ranging applications in practice~\cite{LocationScience}.
In this problem, given a set of candidate facility locations and a set of clients, we need to open facilities at some locations and assign every client to an open facility, where the objective is to minimize the total cost incurred by opening facilities and assigning all clients.
Due to its high relevance in practice, the problem has been extensively studied under various settings ranging from classical \ufl (\UFL)~\cite{hochbaum1982approximation,shmoys1997approximation,chudak1998improved,guha1999greedy,jain2002new,sviridenko2002improved,charikar2005improved,mahdian2006approximation,byrkaaardal2010,li2013}
to ones that incorporate additional constraints embodying plausible scenarios;
to give a very limited list of these extensions, it has been studied under 
capacitated/lower-bounded settings~\cite{korupolu2000analysis,chudak1999improved,levi2004lp,aggarwal20103,bansal20125,an2017lp,li2019facility},
robust settings~\cite{charikar2001algorithms,friggstad2019approximation,dabas2024capacitated},
dynamic/online settings~\cite{meyerson2001online,fotakis2008competitive,an2017dynamic},
and submodular settings~\cite{svitkina2010facility,abbasi2024loglog}.

In this paper, we study a \fl problem originating from match-making services present in many applications such as one-on-one real-time video games (e.g.,~\cite{sf6,tekken8,hearthstone}) and online social/dating apps (e.g.,~\cite{linkupfitness,tinder}).
For a concrete example, \emph{Street Fighter 6}~\cite{sf6} is a popular one-on-one real-time fighting game that provides an online match-making service for tens of thousands users playing at the same time~\cite{sf6stats}.
In such applications, the service provider needs to open servers and assign matched pairs of users to the open servers, and there are several desiderata that the service should satisfy.
As user retention is an essential aspect for the service, the service provider would like to make as many matches as possible.
However, a user may not be compatible to be matched with any arbitrary user; for example in a video game, two users would be incompatible to be matched because their skill levels are discrepant, or they have been recently matched together.
Moreover, it is essential to assign each matched pair to a nearby open server to maintain a connection with imperceptible latency for both players.

We formulate a \fl problem, called \flm (\FLM), that succinctly captures this scenario as follows.
In this problem, we are additionally given information of which client pairs are compatible to be matched.
This information can be modeled by a simple graph, called a \emph{compatibility graph}, on the client set where each edge represents the compatibility of being matched between the two endpoint clients.
We need to find a \emph{maximum matching} in the compatibility graph and assign both clients of each matched pair to the same open facility.
The objective is to minimize the total cost incurred by opening the facilities and assigning the matched clients.

It is noteworthy that this problem naturally generalizes two prominent combinatorial optimization problems --- \ufl and \textsc{Minimum-cost Maximum Matching}.
As \UFL is well-known to be NP-hard, we aim at designing approximation algorithms for \FLM in this paper.
Moreover, besides these classical problems, we also find that the problem fits neatly into the existing work on \ecfl (\ECFL), studied by Kim, Shin, and An~\cite{kim2023constant}, where each open facility is required to service exactly even number of clients.
They presented a $2 \, \rhoUFL$-approximation algorithm for \ECFL through a reduction to a $\rhoUFL$-approximation algorithm for \UFL.
Observe that \FLM generalizes \ECFL because it is equivalent to an instance where the number of clients is even and every pair of clients is compatible to be matched, i.e., a complete graph is given as the compatibility graph.

One challenge in design of approximation algorithms for \FLM is to obtain a reasonable lower bound on the cost of an optimal solution.
In particular, as was also noticed by Kim et al.~\cite{kim2023constant}, the cost of an optimal solution for \FLM can be unbounded by that for \UFL in the same instance.
To see this fact, imagine an instance with two candidate locations far away from each other while there exists one client at each location, respectively.
If the opening costs of the candidate locations are negligible to the distance between them, we can see that any solution for \FLM must pay the distance between the clients whereas an optimal solution for \UFL does not.

This negative result also implies that the standard linear programming (LP) relaxation for \UFL has an unbounded integrality gap for \FLM.
This is rather unsatisfactory because an LP relaxation with a bounded integrality gap often serves as a key component in design of approximation algorithms for \fl and beyond~\cite{williamson2011design}.
In fact, Kim et al.~\cite{kim2023constant} also posed an open question whether \ECFL, among other problems studied in that paper, admits an algorithmically useful LP relaxation with a bounded integrality gap.

\paragraph{Our contribution.}
The main contribution of this paper is in presenting efficient approximation algorithms for \FLM.
To this end, we first formulate in Section~\ref{sec:lp} an LP relaxation attained by combining the standard LP relaxation for \UFL with the matching polytope~\cite{edmonds1965maximum}.
Our algorithmic results to be presented shortly are based on an LP rounding technique which yields the upper bounds on the integrality gap of the LP relaxation same as the approximation ratios.
This answers in the affirmative the aforementioned open question raised by Kim et al.~\cite{kim2023constant}.

In Section~\ref{sec:maxmat}, we present a $3.878$-approximation algorithm for \FLM.
The algorithm initially solves the LP relaxation and computes a minimum-cost maximum matching.
It then reroutes the assignment of the optimal LP solution so as to construct a (fractional) solution supported by the matching at small additional cost.
We can then interpret this modified solution as a feasible solution to the standard LP relaxation for \UFL where each matched edge is regarded as a ``meta-client''.
Hence, the algorithm runs an approximation algorithm for \UFL to obtain the final integral solution.
We remark that, in our analysis, the factors of the cost increase differ between the opening cost and the assignment cost, leading us to use a \emph{bifactor-approximation algorithm}~\cite{byrkaaardal2010} to achieve the claimed approximation ratio.
Moreover, if the compatibility graph is perfectly matchable (i.e., there exists a perfect matching in the graph), we can enhance the rerouting step by exploiting a property of perfect matchings, resulting in a $2.373$-approximation algorithm.

We further investigate the case where the compatibility graph is perfectly matchable in Section~\ref{sec:perfmat}, and present a tailored algorithm that achieves an improved approximation ratio of $2.218$.
Unlike the previous algorithm which computes a maximum matching upfront and reroutes the assignment to construct a solution supported by this matching, this algorithm regards the optimal solution to our LP relaxation for \FLM as a feasible solution to the standard LP relaxation for \UFL.
It then runs a bifactor-approximation algorithm for \UFL to obtain open facilities, and computes a best perfect matching with respect to these obtained open facilities.
Here we highlight that, since \ECFL is a special case of \FLM where the compatibility graph is perfectly matchable, our algorithm yields an improvement over their $2 \, \rhoUFL \approx 2.976$ to $2.218$ in approximation ratio for \ECFL, where $\rhoUFL \approx 1.488$ is due to Li~\cite{li2013}.

We conclude this paper by discussing inherent limitations of our approach and plausible future directions in Section~\ref{sec:concl}.

\paragraph{Further related work.}
We list a few more problems that share in part the common theme with \FLM.
Gourdin, Labb{\'e}, and Laporte~\cite{gourdin2000uncapacitated} study a problem where the clients are collected along routes starting and ending at open facilities while each route can contain at most two clients.
Another example is by Candogan and Feng~\cite{candogan2024mobility}, who study a problem where each client is represented by two locations, and it suffices to connect one of its two locations to an open facility.
\section{Preliminaries} \label{sec:prelim}
\paragraph{Problem definition.}
We now formally define the \flm problem.
An instance of this problem can be represented by $(F, V, E, f, d)$ defined as follows.
We are given a set $F$ of facilities and a set $V$ of clients together lying on a metric $d : \binom{F \cup V}{2} \to \R_+$ satisfying the triangle inequality, i.e., for any $i, j, k \in F \cup V$, we have
$
    \textstyle d(i, j) \leq d(i, k) + d(k, j).
$
Each facility $i \in F$ has an opening cost $f(i) \geq 0$.
For the clients, we are additionally given a set $E \subseteq \binom{V}{2}$ of client pairs, called \emph{compatible pairs}, indicating the two clients are allowed to be matched together.
We call the graph $(V, E)$ the \emph{compatibility graph}.

For every $i \in F$ and $e = \{j, k\} \in E$, we define
$
    d(i, e) := d(i, j) + d(i, k)
$
for simplicity of presentation.
On the other hand, we write $d(e) := d(j, k)$.
Due to the triangle inequality, it is easy to observe the following:
\begin{lemma} \label{lem:prelim:distpairbound}
    For any $i \in F$ and $e \in E$, $d(e) \leq d(i, e)$.
\end{lemma}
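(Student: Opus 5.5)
The plan is a single application of the triangle inequality, with facility $i$ as the intermediate point. Fix $i \in F$ and write $e = \{j, k\} \in E$. By definition, $d(e) = d(j, k)$ and $d(i, e) = d(i, j) + d(i, k)$. So it suffices to show $d(j, k) \leq d(i, j) + d(i, k)$.

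The triangle inequality stated in the problem definition gives, for any three points of $F \cup V$,
\[
    d(j, k) \leq d(j, i) + d(i, k).
\]
Here we take the points to be $j$, $i$, $k$. The metric $d$ is defined on unordered pairs $\binom{F \cup V}{2}$, so it is symmetric and $d(j, i) = d(i, j)$. Substituting gives
\[
    d(e) = d(j, k) \leq d(i, j) + d(i, k) = d(i, e),
\]
which is the claim.

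There is essentially no obstacle. The only points to note are two edge cases. First, the triangle inequality is applied with the facility $i$ as the middle point, which is allowed because the metric lives on $F \cup V$. Second, when $i$ coincides with a location of $j$ or $k$, the convention $d(x,x)=0$ keeps the inequality valid; even without that convention, symmetry and the stated inequality still suffice.
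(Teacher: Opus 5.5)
Your proposal is correct and matches the paper, which states the lemma as an immediate consequence of the triangle inequality $d(j,k) \leq d(j,i) + d(i,k)$ with the facility $i$ as the intermediate point. Your edge-case remark is harmless but unnecessary: $i \in F$ and $j,k \in V$ are distinct elements of $F \cup V$, even when co-located at distance $0$.
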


A feasible solution to \FLM is defined by $(S, M, \sigma)$ where $S \subseteq F$ denotes the set of open facilities, $M$ denotes a maximum matching in the compatibility graph $(V,E)$, and $\sigma : M \to S$ denotes the assignment of each matched pair to an open facility.
The objective is to minimize the total cost incorporating the total opening cost and the total assignment cost, i.e.,
$
    \textstyle \sum_{i \in S} f(i) + \sum_{e \in M} d(\sigma(e), e).
$

\paragraph{\ufl.}
In the design of our approximation algorithms for \FLM, it is essential to understand the classic \ufl problem.
Recall that, in this problem, we are given $(F, C, f, d)$ consisting of a set $F$ of facilities, a set $C$ of clients, an opening cost $f : F \to \R_+$, and an assignment cost $d : F \times C \to \R_+$ satisfying the \emph{three-hop inequality},\footnote{This inequality is usually called the \emph{triangle inequality} in the literature, but we use this definition to distinguish it from the triangle inequality of $d$ in \FLM. In fact, one can easily construct from $d$ on $F \times C$ satisfying the three-hop inequality to an equivalent metric on $\binom{F \cup C}{2}$ satisfying the triangle inequality, but we stick to this definition since it is sufficient in our paper.}
i.e., for any $i, i' \in F$ and $j, j' \in C$, we have
$
    \textstyle d(i, j) \leq d(i, j') + d(i', j') + d(i', j).
$
The objective is to find $(S, \sigma)$, where $S$ denotes the set of open facilities and $\sigma : C \to S$ denotes the assignment of each client to an open facility, at minimum total cost
$
    \textstyle \sum_{i \in S} f(i) + \sum_{j \in C} d(\sigma(j), j).
$
We remark that $\sigma(j) = \argmin_{i \in S} d(i, j)$ for every $j \in C$ in \UFL.

Note that there exists a straightforward approximation-preserving reduction from \UFL to \FLM obtained by, given an \UFL instance, making one more copy of each client at the same location where each client is compatible with only its copy, respectively, and doubling the opening costs.
Therefore, \FLM inherits all the hardness/inapproximability results from \UFL.
\begin{theorem}[\cite{guha1999greedy}]
    \FLM is MAXSNP-hard.
    Moreover, there does not exist an approximation algorithm for \FLM with an approximation ratio better than $1.463$ unless $\textrm{NP} \subseteq \textrm{DTIME}(n^{\log \log n})$.
\end{theorem}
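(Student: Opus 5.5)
The plan is to transfer the known hardness of \UFL from~\cite{guha1999greedy} to \FLM via the reduction sketched just before the statement. That reduction must preserve costs exactly, so that any $\alpha$-approximation for \FLM gives an $\alpha$-approximation for \UFL. Guha and Khuller showed that \UFL is MAXSNP-hard and admits no approximation better than $1.463$ unless $\textrm{NP} \subseteq \textrm{DTIME}(n^{\log \log n})$. Their hard instances are metric, which is the setting we need, since \FLM requires $d$ to be a metric on $F \cup V$.

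Concretely, given a metric \UFL instance $(F, C, f, d)$, I build the \FLM instance as follows.
\begin{itemize}
\item For every client $j \in C$, create two clients $j$ and $j'$, both placed at the location of $j$, so $d(j, j') = 0$. The result is still a (pseudo)metric; if distinct points are required, this is harmless after identifying co-located points.
\item Let the compatibility graph be the perfect matching $E = \{\{j, j'\} : j \in C\}$.
\item Set the opening costs to $f'(i) = 2f(i)$.
\end{itemize}
Since $E$ is itself a perfect matching, the only maximum matching is $M = E$. Hence every feasible \FLM solution assigns each pair $\{j, j'\}$ to some open facility $i$. That pair pays $d(i, j) + d(i, j') = 2d(i, j)$.

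So an \FLM solution $(S, E, \sigma)$ corresponds to the \UFL solution $(S, \sigma)$, where $\sigma$ is viewed as a map on $C$, and vice versa. The \FLM cost is
\[
  \textstyle\sum_{i \in S} 2f(i) + \sum_{j \in C} 2d(\sigma(j), j),
\]
which is exactly twice the \UFL cost. This bijection preserves ratios: the optima satisfy $\mathrm{OPT}_{\FLM} = 2\,\mathrm{OPT}_{\UFL}$, and any $\alpha$-approximate \FLM solution maps back to an $\alpha$-approximate \UFL solution. The reduction and the mapping back both run in polynomial time.

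The conclusion then follows from the cited results. The reduction is a strict cost-scaling reduction, so it is in particular an L-reduction with constants $2$ and $1/2$, and MAXSNP-hardness transfers. For the same reason, an approximation ratio below $1.463$ for \FLM would give one for \UFL, contradicting~\cite{guha1999greedy} unless $\textrm{NP} \subseteq \textrm{DTIME}(n^{\log\log n})$.

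The only step that needs care is confirming that the reduction is valid for the metric setting, and it is: duplicating points at distance zero preserves the triangle inequality. I expect no real obstacle here.
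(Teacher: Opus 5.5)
Your proposal is correct and follows the paper's own route: the paper uses exactly this reduction (duplicate each client at its location, make each client compatible only with its copy, double the opening costs), under which each \FLM solution costs exactly twice the corresponding \UFL solution, so the Guha--Khuller hardness results transfer. Your extra checks are also sound: $E$ is the unique maximum matching, the L-reduction constants are $2$ and $1/2$, and zero distances between copies are allowed by the paper's metric.
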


Recall that the standard LP relaxation for \UFL is formulated as follows:
\begin{align}
\min \quad & \textstyle \sum_{i \in F} f(i) \, y_i + \sum_{i \in F} \sum_{j \in C} d(i,j) \, x_{i,j} \tag{$\mathrm{LP}_{\UFL}$} \label{lp:ufl} \\
\text{s.t.} \quad & \textstyle \sum_{i \in F} x_{i,j} = 1, && \forall j \in C, \label{lp:ufl:assign} \\
& x_{i,j} \le y_i, && \forall i \in F,\, j \in C, \label{lp:ufl:open} \\
& x_{i,j} \ge 0,\, y_i \ge 0, && \forall i \in F,\, j \in C, \label{lp:ufl:nonneg}
\end{align}
where $y_i = 1$ indicates that facility $i$ is open, and $x_{i,j} = 1$ indicates that client $j$ is assigned to facility $i$.
Given a feasible solution $(x, y)$ to \eqref{lp:ufl}, let us denote by $\open(y)$ and $\connufl(x)$ the total opening and assignment costs incurred by $(x, y)$, respectively, i.e., $\open(y) := \sum_{i \in F} f(i) y_i$ and $\connufl(x) := \sum_{i \in F}\sum_{j \in C} d(i, j) x_{i, j}.$

We now define a \emph{bifactor-approximation LP-rounding algorithm}.
For some $\lambda_f$ and $\lambda_c$, we say an algorithm is a \emph{$(\lambda_f, \lambda_c)$-approximation LP-rounding algorithm} for \UFL if, given any feasible solution $(x, y)$ to \eqref{lp:ufl}, the algorithm outputs an integral feasible \UFL solution whose expected cost is bounded from above by $\lambda_f \cdot \open(y) + \lambda_c \cdot \connufl(x)$.
We will later exploit the bifactor-approximation LP-rounding algorithm guaranteed by the next proposition.
\begin{proposition}[\cite{byrkaaardal2010}] \label{prop:prelim:bifactor}
    For any $\lambda \geq 1.678$, there exists a $(\lambda, 1 + \frac{2}{e^{\lambda}})$-approximation LP-rounding algorithm for \UFL.
\end{proposition}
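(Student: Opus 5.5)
Since this proposition is quoted from~\cite{byrkaaardal2010}, the plan is to reproduce their argument: scale the opening variables, then apply clustered (Chudak--Shmoys style) randomized rounding. Given a feasible $(x,y)$ for \eqref{lp:ufl}, I first split facilities, duplicating locations if necessary, so that we may assume $x_{i,j}\in\{0,y_i\}$. I then scale to $\bar y=\lambda y$, again splitting facilities so that $\bar y_i\le 1$. For each client $j$ I sort the facilities serving it by distance. The \emph{close} facilities $C_j$ are the nearest ones carrying total $\bar y$-mass exactly $1$, splitting a facility at the boundary. The \emph{distant} facilities $D_j$ carry the remaining mass $\lambda-1$. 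I write $d^{C}_{av}(j)$ and $d^{D}_{av}(j)$ for the $\bar y$-weighted average distances to $C_j$ and $D_j$, and $d^C_{\max}(j)\le d^D_{av}(j)$ for the farthest close distance. Then $d_{av}(j)=\sum_i d(i,j)x_{i,j}=\frac1\lambda d^C_{av}(j)+\frac{\lambda-1}{\lambda}d^D_{av}(j)$.

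Next comes clustering. I repeatedly pick the unclustered client $j$ minimizing $d^C_{av}(j)+d^C_{\max}(j)$, make $C_j$ a cluster, and discard every client whose close set intersects $C_j$. Clusters are then disjoint. In each cluster I open exactly one facility, choosing $i$ with probability $\bar y_i$. Every facility outside all clusters is opened independently with probability $\bar y_i$. The expected opening cost is exactly $\sum_i f(i)\bar y_i=\lambda\cdot\open(y)$, which gives the first factor.

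For the connection cost of a client $j$, I use a standard lemma: for a set of facilities with total $\bar y$-mass $m$, opened with negative correlation (cluster-disjoint structure), the probability that none opens is at most $e^{-m}$. Conditioned on some close facility opening, the expected distance is at most $d^C_{av}(j)$. Conditioned on no close facility but some distant facility opening, it is at most $d^D_{av}(j)$. If neither opens, which happens with probability at most $e^{-\lambda}$, $j$ falls back to the open facility of the cluster centered at some $j'$ with $C_j\cap C_{j'}\neq\emptyset$. By the greedy rule and the triangle inequality, this costs at most $d^D_{av}(j)+d^C_{\max}(j)+d^C_{av}(j)$, and with more care at most $d^D_{av}(j)+2d^D_{av}(j)$-type bounds. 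Combining the three cases with their probabilities, roughly $(1-e^{-1})$, $(e^{-1}-e^{-\lambda})$ and $e^{-\lambda}$ adjusted for the correlation, bounds the expected connection cost by a convex combination of $d^C_{av}(j)$ and $d^D_{av}(j)$.

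The main obstacle is the final inequality. I must show that this combination is at most $(1+2e^{-\lambda})\,d_{av}(j)=(1+2e^{-\lambda})\bigl(\frac1\lambda d^C_{av}+\frac{\lambda-1}\lambda d^D_{av}\bigr)$ for every ratio between $d^C_{av}$ and $d^D_{av}$. This requires a sharper fallback analysis, namely Byrka--Aardal's bound that the expected distance to the closest open facility in $C_j\cup D_j$, given one is open, is at most $d^D_{av}$ in a way weighted against $d^C_{\max}$. The inequality is linear in $(d^C_{av},d^D_{av})$ with $d^C_{av}\le d^D_{av}$, so it suffices to check the two extreme points $d^C_{av}=d^D_{av}$ and $d^C_{av}=0$. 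The latter yields a one-variable inequality in $\lambda$, and I would verify numerically that it holds exactly when $\lambda\ge 1.678$. This is where the threshold in the proposition comes from.
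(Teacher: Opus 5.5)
The paper gives no proof of this proposition; it imports it from Byrka--Aardal. Your algorithm is theirs: complete solution, scaling by $\lambda$, close/distant split, greedy clustering by $d^C_{av}+d^C_{\max}$, one facility per cluster, independent rounding elsewhere. Your opening-cost bound $\lambda\cdot\open(y)$ is correct. The gap is the fallback estimate, which is the genuinely nontrivial lemma of their analysis and which you only assert.

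Conditioned on no facility of $C_j\cup D_j$ opening, the facility opened in the cluster of $j'$ is drawn (proportionally to $\bar y$) only from $C_{j'}\setminus(C_j\cup D_j)$. Its distance to $j'$ can therefore be about $d^C_{\max}(j')$ rather than $d^C_{av}(j')$. The chain $j\to C_j\cap C_{j'}\to j'\to$ facility, together with the greedy rule, then gives only $d^C_{\max}(j)+2d^C_{\max}(j')\le 2d^C_{av}(j)+3d^C_{\max}(j)$, which is far too weak.

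Moreover, the bound $d^D_{av}(j)+d^C_{\max}(j)+d^C_{av}(j)$ you state is false for every $\lambda>2$. Work on a line with small $\delta>0$ and $s=2/(1+2\delta)$.
\begin{itemize}
\item Client $j$ at $0$ uses a facility $i_0$ at $1$ of mass $\delta$, facilities at $-1$ of total mass $\lambda-1+\delta$, and a facility $a$ at $1+s$ of mass $1-2\delta$.
\item Client $j'$ at $1+s$ uses $a$, $i_0$, a facility $b$ at $1+2s$ of mass $\delta$, and mass $\lambda-1$ beyond $1+2s$.
\end{itemize}
Then $d^C_{av}(j)=d^C_{\max}(j)=1$ with $i_0\in C_j$, and $C_{j'}=\{a,i_0,b\}$. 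Also $d^C_{av}(j')+d^C_{\max}(j')=2=d^C_{av}(j)+d^C_{\max}(j)$, so, breaking the tie toward $j'$ (or after an arbitrarily small perturbation), $j'$ is a center and $j$ is attached to it.

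When all of $C_j\cup D_j$ is closed, the cluster of $j'$ must have opened $b$, and $j$ pays $1+2s\to 5$. Meanwhile $d^D_{av}(j)+d^C_{\max}(j)+d^C_{av}(j)\to(3\lambda-1)/(\lambda-1)<5$. For $\lambda>3$, even $d^C_{av}(j)+2d^D_{av}(j)\to(3\lambda+1)/(\lambda-1)<5$. This range matters here, since the paper applies the proposition with $\lambda=2.218$ and $\lambda=2.373$.

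A complete argument must therefore do one of two things. Either it proves a fallback estimate that is actually valid on the whole range, or it bounds the expected fallback cost jointly with the probability of the fallback event. Multiplying a worst-case conditional distance by $e^{-\lambda}$ does not work; in the example that event has probability only about $\delta e^{-(\lambda-1)}$.

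Your middle case has the same conditioning problem. A cluster may contain both close and distant facilities of $j$. Conditioning on all close facilities being closed then boosts the distant ones in that cluster. So ``given no close but some distant facility open, the expected distance is at most $d^D_{av}(j)$'' fails in general. The standard remedy is to bound $\Pr[\text{no open facility within distance } r \text{ of } j]$ by $e^{-\bar y(\text{facilities of } j \text{ within } r)}$ and integrate over $r$.

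Finally, your ``main obstacle'' is misplaced. Suppose a fallback bound of the form $d^C_{av}(j)+2d^D_{av}(j)$ is available. Then the crude weights $1-e^{-1}$, $e^{-1}-e^{-\lambda}$, $e^{-\lambda}$ already give equality at $d^C_{av}=d^D_{av}$. At $d^C_{av}=0$ the required inequality is exactly $\lambda(1-e^{-1})-1+(\lambda-2)e^{-\lambda}\ge 0$, i.e.\ $\lambda\ge 1.6774\ldots$. No sharper estimate for the nearest open facility in $C_j\cup D_j$ is needed. What is missing is the fallback lemma itself.
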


\paragraph{Maximum matching.}
Lastly, we present several known facts about maximum matchings that will be used in what follows.
Given two sets $A$ and $B$, we denote the symmetric difference between $A$ and $B$ by $A \mathbin{\triangle} B := (A \setminus B) \cup (B \setminus A)$.

\begin{proposition}
    Given a graph and two maximum matchings $M$ and $M'$ in the graph, $M \mathbin{\triangle} M'$ is a union of alternating paths and cycles.
    If $M$ and $M'$ are perfect matchings, $M \mathbin{\triangle} M'$ is a union of alternating cycles.
\end{proposition}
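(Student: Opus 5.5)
The plan is to prove both parts through the degree structure of the subgraph $H := (V, M \mathbin{\triangle} M')$. Every vertex $v \in V$ is incident to at most one edge of $M$ and at most one edge of $M'$. An edge lying in both matchings is removed by the symmetric difference, so $v$ has degree at most $2$ in $H$. Moreover, if $v$ has degree exactly $2$, then one of its incident edges is in $M \setminus M'$ and the other is in $M' \setminus M$. A graph with maximum degree at most $2$ decomposes into its connected components, and each component is a path or a cycle. In each such component, consecutive edges share a vertex and therefore come from different matchings, so the edges alternate between $M$ and $M'$. This proves the first claim. In fact it holds for any two matchings; maximality is not needed for this part.

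For the second claim, take $M$ and $M'$ perfect and suppose some component of $H$ is a path $P$ with at least one edge. The first step is to look at an endpoint $v$ of $P$. It has degree exactly $1$ in $H$, so only one of the matchings contributes an edge at $v$ to $H$. Say the edge of $P$ at $v$ lies in $M \setminus M'$. Since $M'$ is perfect, $v$ is covered by some edge $g \in M'$. If $g \notin M$, then $g$ lies in $H$, giving $v$ a second edge there and contradicting degree $1$. So $g \in M \cap M'$. But $v$ is then covered by two distinct edges of $M$, namely $g$ and the path edge (they are distinct because the path edge is not in $M'$). This contradicts $M$ being a matching. The symmetric case, where the path edge lies in $M' \setminus M$, is handled the same way with the roles of $M$ and $M'$ swapped. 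Hence no component is a nontrivial path, and every component with edges is an alternating cycle; isolated vertices contribute nothing.

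There is no serious obstacle in this argument. The one point to be careful about is the endpoint analysis: one must rule out that the endpoint's other matching edge is a common edge of $M$ and $M'$ that has disappeared from the symmetric difference. The matching property settles this, as shown above. It is also worth noting, for completeness, that an alternating cycle necessarily has even length.
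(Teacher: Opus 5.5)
Your proof is correct, and it is the standard argument. The paper does not prove this proposition itself but defers to textbook references (Schrijver; Korte and Vygen), which use exactly this reasoning: maximum degree at most $2$ in $(V, M \mathbin{\triangle} M')$, alternation because two edges of one matching cannot share a vertex, and, in the perfect case, every vertex having degree exactly $0$ or $2$ in $M \mathbin{\triangle} M'$ — which is a more direct way to reach your endpoint contradiction.
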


Given a graph $G = (V, E)$, for $U \subseteq V$,
we denote by $\delta(U) := \{ \{u, v\} \in E \mid u \in U \wedge v \not\in U \}$ the set of edges across $U$, and by $E[U] := \{ \{u, v\} \in E \mid \{u, v\} \subseteq U\}$ the set of edges within $U$.
For $v \in V$, we simply write $\delta(v) := \delta(\{v\})$.
Let $\calM(G)$ denote the set of all maximum matchings in $G$, and $\nu(G)$ denote the size of a maximum matching in $G$. We may omit the parameter graph $G$ if clear from context.
We define
$
    \PMM(G) := \conv ( \{ \chi^M \mid M \in \calM \} ),
$
where $\chi^M \in \{0, 1\}^E$ denotes the characteristic
vector of $M \subseteq E$ while $\conv(T)$ represents the convex hull of a set $T$. 

Following are well-known polyhedral properties of $\PMM(G)$.
For the sake of completeness, we provide a proof for the first half of Proposition~\ref{prop:prelim:pmm} in Appendix~\ref{app:deferpf}; see, e.g., Schrijver~\cite{schrijver2003combinatorial} and Korte and Vygen~\cite{korte2008combinatorial} for the proofs of the other propositions.

\begin{proposition} \label{prop:prelim:pmm}
    $\PMM(G)$ is comprised of $z \in \R^E$ satisfying
   \begin{align}
        &
        \textstyle \sum_{e \in \delta(v)} z_e \leq 1, && \forall v \in V, \label{const:pmm:degree}
        \\&
        \textstyle \sum_{e \in E[U]} z_e \leq \frac{|U|-1}{2}, && \forall U \subseteq V : |U| \text{ is odd}, \label{const:pmm:oddset}
        \\&
        \textstyle \sum_{e \in E} z_e = \nu, \label{const:pmm:size}
        \\&
        z_e \geq 0, && \forall e \in E. \label{const:pmm:nonneg}
    \end{align}
    Moreover, if $G$ is perfectly matchable (i.e., $\nu = \frac{|V|}{2}$), $\PMM(G)$ is also determined by $z \in \R^E$ satisfying
    \begin{align}
        &
        \textstyle \sum_{e \in \delta(v)} z_e = 1, && \forall v \in V, \label{const:ppm:degree}
        \\&
        \textstyle \sum_{e \in \delta(U)} z_e \geq 1, && \forall S \subseteq V : |U| \text{ is odd}, 
        \label{const:ppm:oddset}
        \\&
        z_e \geq 0, && \forall e \in E. \label{const:ppm:nonneg}
    \end{align}
\end{proposition}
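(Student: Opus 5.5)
We prove both descriptions of $\PMM(G)$ from Edmonds' matching polytope theorem. That theorem says the convex hull of all matchings of $G$ is
\[
P_{\mathrm M}(G)=\{z\ge 0 : \text{constraints \eqref{const:pmm:degree} and \eqref{const:pmm:oddset} hold}\}.
\]

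\textbf{First description.} The linear functional $z\mapsto\sum_{e\in E}z_e$ is at most $\nu$ on every matching, hence on all of $P_{\mathrm M}(G)$. So $H=\{z:\sum_e z_e=\nu\}$ is a supporting hyperplane, and $P_{\mathrm M}(G)\cap H$ is a face of $P_{\mathrm M}(G)$. A face of a polytope is the convex hull of the vertices lying in it. The vertices of $P_{\mathrm M}(G)$ are exactly the $\chi^M$ for matchings $M$, and such a vertex lies in $H$ iff $|M|=\nu$, i.e.\ iff $M\in\calM$. Hence the face equals $\PMM(G)$. This face is precisely the set cut out by \eqref{const:pmm:degree}--\eqref{const:pmm:nonneg}.

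\textbf{Second description (perfectly matchable $G$).} Assume $\nu=|V|/2$, and let $Q$ be the polytope defined by \eqref{const:ppm:degree}--\eqref{const:ppm:nonneg}. We show $Q$ equals the set $P$ from the first description.

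\emph{$Q\subseteq P$.} Take $z\in Q$. Summing \eqref{const:ppm:degree} over all $v\in V$ counts each edge twice, so $2\sum_e z_e=|V|$, giving $\sum_e z_e=\nu$. Constraint \eqref{const:pmm:degree} is immediate from \eqref{const:ppm:degree}. For an odd set $U$, summing \eqref{const:ppm:degree} over $v\in U$ gives
\[
2\sum_{e\in E[U]}z_e+\sum_{e\in\delta(U)}z_e=|U|.
\]
Combined with $\sum_{e\in\delta(U)}z_e\ge 1$, this yields $\sum_{e\in E[U]}z_e\le (|U|-1)/2$, which is \eqref{const:pmm:oddset}.

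\emph{$P\subseteq Q$.} Take $z\in P$. Summing \eqref{const:pmm:degree} over $V$ gives $2\nu=|V|\ge\sum_v\sum_{e\in\delta(v)}z_e=2\sum_e z_e=2\nu$. So equality holds throughout, and since each term is at most $1$, every degree constraint is tight; this is \eqref{const:ppm:degree}. With the degree equalities in hand, the identity displayed above holds again for each odd $U$. Then $\sum_{e\in E[U]}z_e\le (|U|-1)/2$ gives $\sum_{e\in\delta(U)}z_e\ge 1$, which is \eqref{const:ppm:oddset}.

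Thus $Q=P=\PMM(G)$.

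\textbf{Remarks.} The only substantive step is invoking Edmonds' theorem for $P_{\mathrm M}(G)$. If a self-contained proof were required instead, the main obstacle would be proving that theorem itself, for instance by the standard argument: take a minimal counterexample, pick a non-integral vertex, and either find a tight nontrivial odd set and shrink it, or show that the tight degree constraints force a perfect-matching-polytope vertex and handle it by contracting tight odd cuts. The remaining steps are the counting identities above. The statement's quantifier reads ``$\forall S\subseteq V$''; we interpret it as $U$.
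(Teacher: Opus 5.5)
Your proof is correct, but it follows a different route from the paper's. The paper (Appendix~A) proves only the nontrivial inclusion of the first description, and it does so through the perfect matching polytope of an auxiliary graph. It doubles $G$ into the perfectly matchable graph $\Gtilde$ (a copy of $G$ plus the edges $E''=\{\{v,v'\}:v\in V\}$) and extends $z$ by $\tilde z_{\{v,v'\}}=1-\sum_{e\in\delta(v)}z_e$. It cites Schrijver for the fact that $\tilde z$ satisfies the perfect matching constraints of $\Gtilde$, and decomposes $\tilde z$ into perfect matchings $\tilde M$. An averaging argument then shows that every $\tilde M$ in the support has $|\tilde M\cap E''|=|V|-2\nu$, since fewer connecting edges would yield a matching in $G$ of size above $\nu$. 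Hence each restriction $\tilde M\cap E$ is a maximum matching of $G$. Your face argument is the same principle (a convex combination attaining the maximum of a linear functional is supported on maximizers), stated abstractly. By taking Edmonds' matching polytope theorem as the black box, you avoid the auxiliary graph and get both inclusions in a few lines. The paper's version essentially replays the standard doubling proof of that theorem, and it is explicitly constructive: a decomposition of $\tilde z$ in $\Gtilde$ turns directly into a decomposition of $z$ into maximum matchings. Your derivation of the second description from the first, via $2\sum_{e\in E[U]}z_e+\sum_{e\in\delta(U)}z_e=|U|$ together with the forced tightness of all degree constraints, is also correct. It goes beyond the paper, which defers that half to Schrijver and Korte--Vygen.
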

\begin{proposition} \label{prop:prelim:sep}
    There exists a polynomial-time separation oracle for $\PMM(G)$.
\end{proposition}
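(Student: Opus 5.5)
The plan is to reduce separation over $\PMM(G)$ to separation over the matching polytope, and to obtain the latter from the Padberg--Rao minimum odd cut algorithm. By Proposition~\ref{prop:prelim:pmm}, $\PMM(G)$ is cut out by the degree constraints \eqref{const:pmm:degree}, the odd-set constraints \eqref{const:pmm:oddset}, the single equation \eqref{const:pmm:size}, and nonnegativity \eqref{const:pmm:nonneg}. Given a point $z \in \R^E$, we first compute $\nu$ in polynomial time, for instance with Edmonds' blossom algorithm. We then check \eqref{const:pmm:degree}, \eqref{const:pmm:size} and \eqref{const:pmm:nonneg} directly, since there are only $O(|V|+|E|)$ of them. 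If any one fails, it is itself a separating hyperplane (for \eqref{const:pmm:size} we take whichever of the two inequalities is violated). What remains is to separate over the exponentially many odd-set constraints.

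Next we rewrite the odd-set constraints as an odd-cut condition. We may now assume that $z \ge 0$ and that $\sum_{e\in\delta(v)} z_e \le 1$ for every $v$. Define a slack $s_v := 1 - \sum_{e\in\delta(v)} z_e \ge 0$. Build an auxiliary graph $G'$ by adding a new vertex $r$ with edges $\{v,r\}$ of capacity $s_v$; every original edge $e$ gets capacity $z_e$. Summing the degree slacks over $U$ gives $\sum_{v\in U}\bigl(\sum_{e\in\delta(v)}z_e + s_v\bigr) = |U|$. The left side equals $2\sum_{e\in E[U]}z_e + \sum_{e\in\delta(U)} z_e + \sum_{v\in U}s_v$, and the last two terms are exactly the capacity of the cut $\delta_{G'}(U)$. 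Hence constraint \eqref{const:pmm:oddset} for $U$ is equivalent to the statement that $\delta_{G'}(U)$ has capacity at least $1$.

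Mark the vertices of $V$ as terminals if $|V|$ is even. If $|V|$ is odd, also mark $r$, so that the total number of terminals is even. A set $U\subseteq V$ with $|U|$ odd is then an odd set of terminals in $G'$. Conversely, any terminal-odd cut in $G'$ can be complemented if necessary so that it avoids $r$, and the complement of a terminal-odd set is still terminal-odd. So the odd-set constraints all hold iff every $T$-odd cut of $G'$ has capacity at least $1$.

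The last step, and the one I regard as the real content, is to invoke the Padberg--Rao theorem. For nonnegative capacities, a minimum $T$-odd cut can be found in polynomial time: build a Gomory--Hu tree with $|V|$ max-flow computations, and observe that some fundamental cut of that tree is a minimum $T$-odd cut. If the resulting cut has capacity below $1$, then its side not containing $r$ yields an odd $U$ whose constraint \eqref{const:pmm:oddset} is violated, and we return it as the separating hyperplane. Otherwise $z\in\PMM(G)$. Every step runs in polynomial time, which completes the argument. Since this is a standard fact, I would cite Schrijver~\cite{schrijver2003combinatorial} for the Padberg--Rao step rather than reprove the Gomory--Hu argument.
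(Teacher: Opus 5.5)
Your argument is correct, and it is the standard Padberg--Rao separation proof that the paper relies on without reproducing; the paper gives no proof of this proposition and refers to Schrijver and Korte--Vygen instead. As in those references, you treat the problem as separation over the matching polytope together with the single equation $\sum_{e} z_e = \nu$, and you separate the odd-set constraints via a minimum $T$-odd cut (Gomory--Hu tree) in the graph augmented by the slack vertex $r$.
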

\begin{proposition}
\label{poly-mmp} \label{prop:prelim:concom}
    Given any feasible solution $z \in \PMM(G)$, one can in polynomial time decompose $z$ into a convex combination of $O(|E|)$ maximum matchings in $G$. That is, one can find a coefficient vector $\gamma \in \R^\calM_+$ with $O(|E|)$ nonzero entries satisfying $\sum_{M \in \calM} \gamma_M = 1$ and $z_e = \sum_{M \in \calM} \gamma_M \chi^M_e$ for all $e \in E$.
\end{proposition}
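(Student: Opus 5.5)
We would prove the decomposition statement by a constructive version of Carath\'eodory's theorem, driven by the polynomial-time separation oracle of Proposition~\ref{prop:prelim:sep} and the equivalence of separation and optimization (Gr\"otschel, Lov\'asz and Schrijver). Since $\PMM(G)$ lives in $\R^E$, its dimension is at most $|E|$. We therefore aim for a procedure that strictly lowers the dimension of the current face in each round; it then stops after at most $|E|+1$ rounds, giving $O(|E|)$ maximum matchings.

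\textbf{Loop invariant.} We maintain a current point $z' \in \PMM(G)$ and the face $\Phi$ of $\PMM(G)$ on which it lies. Initially $z' = z$ and $\Phi = \PMM(G)$.

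\textbf{One round.}
\begin{enumerate}
\item Find the minimal face containing $z'$. Its equality system consists of the constraints among \eqref{const:pmm:degree}--\eqref{const:pmm:nonneg} that are tight at $z'$. We can recover enough of this system algorithmically. The tight nonnegativity and degree constraints are read off directly. For the tight odd-set constraints, we repeatedly use the oracle on slightly perturbed points, or equivalently exploit the laminar structure of tight odd sets, obtained as in the standard Edmonds/Padberg--Rao min-odd-cut separation.
\item Optimize a generic linear objective over this face, using the ellipsoid method with the separation oracle restricted to the face. This yields a vertex of the face, which is an integral point $\chi^{M}$ for some $M \in \calM$ by Proposition~\ref{prop:prelim:pmm}. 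More concretely, $M$ is a maximum-weight maximum matching using only edges in the support of $z'$, and it is tight on the tight odd sets.
\item If $z' = \chi^{M}$, stop. Otherwise move from $\chi^M$ through $z'$: set $w = \chi^M + t(z' - \chi^M)$, where $t>1$ is the largest value keeping $w \in \PMM(G)$. This $t$ is computable by a line search (binary search with the oracle, or exactly via the polynomial encoding length of vertices).
\item Then $z' = \frac{1}{t}w + (1-\frac{1}{t})\chi^M$. The point $w$ lies on a proper face of the current face, so the dimension drops by at least one. Recurse on $w$ and multiply out the coefficients.
\end{enumerate}

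\textbf{Correctness and counting.} All coefficients are nonnegative and sum to $1$ by induction on the dimension. Each round adds exactly one matching, and there are at most $\dim \PMM(G) + 1 \le |E|+1$ rounds.

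\textbf{Main obstacle.} The hard step is doing the face restriction and the exact line search in polynomial time and with exact (rational, polynomially bounded) arithmetic. Here I would invoke the GLS machinery directly: with a strong separation oracle for a well-described polytope, one can find, for any given point in it, vertices and convex coefficients expressing it, with at most $\dim + 1$ vertices.

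Alternatively, one could avoid the ellipsoid method by a purely combinatorial argument. Solve the LP over \eqref{const:pmm:degree}--\eqref{const:pmm:nonneg} restricted to the support of $z'$ with the tight odd sets forced to equality. The tight odd sets form a laminar family, obtainable by uncrossing, so this LP has polynomially many constraints and can be handled by Edmonds' weighted matching algorithm. I would first try the GLS citation and fall back to the laminar-family argument if more explicitness is needed.
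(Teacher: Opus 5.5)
Your proposal is correct and matches what the paper relies on: the paper gives no proof of its own here and defers to Schrijver and Korte--Vygen, where such decompositions come from the Gr\"otschel--Lov\'asz--Schrijver constructive Carath\'eodory theorem for well-described polytopes with a strong separation oracle (here the oracle of Proposition~\ref{prop:prelim:sep}), and that theorem is proved by essentially your loop: restrict to the minimal face, take a vertex, and push along the ray to the relative boundary. The one inaccurate piece is your combinatorial fallback, which your GLS route does not need: the tight odd sets need not be laminar (uncrossing only gives a laminar subfamily spanning the same equalities), and an LP that keeps just those equalities plus degree and nonnegativity constraints, dropping the non-tight odd-set inequalities, need not describe the face and can have half-integral vertices.
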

\section{LP relaxation}  \label{sec:lp}
In this section, we present the LP relaxation for \FLM that will be used in our approximation algorithms.
Let $G = (V, E)$ be the compatibility graph of the given \FLM instance.
We formulate an LP relaxation as follows:
\begin{align}
\min  \quad & \textstyle \sum_{i \in F} f(i) \, y_i + \sum_{i \in F} \sum_{e \in E} d(i, e) \, x_{i,e} \tag{$\textrm{LP}_\FLM$} \label{lp:flm}\\
\text{s.t.} \quad
& \textstyle x_e = \sum_{i \in F} x_{i,e}, && \forall e \in E, \nonumber \\
& \{x_e\}_{e \in E} \in \PMM(G), \label{lp:flm:pmm} \\
& \textstyle \sum_{e \in \delta(j)} x_{i,e} \le y_i, && \forall i \in F, j \in V, \label{lp:flm:flow} \\
& x_{i,e}, y_i \ge 0, && \forall i \in F, e \in E, \label{lp:flm:nonneg}
\end{align}
where $y_i = 1$ represents that facility $i \in F$ is open while $x_{i, e} = 1$ indicates that compatible pair $e \in E$ is together assigned to facility $i \in F$.

Similarly to \eqref{lp:ufl}, let $\connflm(x)$ be the total assignment cost incurred by a feasible solution $(x, y) \in \R^{F \times E}_+ \times \R^F_+$ to \eqref{lp:flm}, i.e.,
$
    \textstyle \connflm(x) := \sum_{i \in F} \sum_{e \in E} d(i, e) \, x_{i, e}.
$
Note that the opening cost of $(x, y)$ is denoted in the same way by $\open(y)$.
We also highlight that \eqref{lp:flm} can be solved in polynomial time because \eqref{lp:flm:pmm} is separable due to Proposition~\ref{prop:prelim:sep}.

%
We argue that \eqref{lp:flm} is a feasible LP relaxation for \FLM.
Indeed, for any feasible integral solution $(S, M, \sigma)$ to \FLM, let us consider the solution $(x, y)$ such that, for any $i \in F$ and $e \in E$,
\[
    y_i := \I[i \in S]
    \text{ and }
    x_{i, e} := \I[e \in M \wedge i \in S \wedge \sigma(e) = i].
\]
Note that, if $e \in M$, there exists exactly one facility $i \in F$ such that $\sigma(e) = i$; otherwise, if $e \in E \setminus M$, $e$ is not assigned to any facility. 
We can thus deduce $x_e = \I[e \in M]$ for every $e \in E$, implying that \eqref{lp:flm:pmm} is satisfied because $M$ is a maximum matching in $G$.
Furthermore, if $j \in V$ is assigned to $i \in F$, there exists exactly one pair $e \in E$ such that $j \in e$ and $\sigma(e) = i$, showing that \eqref{lp:flm:flow} is also satisfied.
It is trivial to see that \eqref{lp:flm:nonneg} is met.

We now discuss that \eqref{lp:flm:pmm} and \eqref{lp:flm:flow} are necessary to formulate an LP relaxation with a bounded integrality gap.
Firstly, \eqref{lp:flm:pmm} is needed because \FLM includes the \textsc{Minimum-cost Perfect Matching} problem as a special case.
For example, given a perfectly matchable compatibility graph $G$, one can imagine an LP where \eqref{lp:flm:pmm} is replaced with the degree bound constraints, i.e.,
\[
    \textstyle \sum_{e \in \delta(j)} x_e = 1 \text{ for every $j \in V$}.
\]
However, consider an instance consisting of two free facilities $i_1, i_2$, located at different positions, and six mutually compatible clients $j_1, j_2, j_3, k_1, k_2, k_3$ such that $j_1, j_2, j_3$ are co-located with $i_1$ and $k_1, k_2, k_3$ are co-located with $i_2$.
Observe that the cost of an optimal LP solution is 0 by setting
\begin{align*}
    &
    y_{i_1} = y_{i_2} = 1 \text{ and}
    \\
    &
    \textstyle x_{i_1, \{j_1, j_2\}} = x_{i_1, \{j_2, j_3\}} = x_{i_1, \{j_3, j_1\}} = x_{i_2, \{k_1, k_2\}} = x_{i_2, \{k_2, k_3\}} = x_{i_2, \{k_3, k_1\}} = \frac{1}{2}.
\end{align*}
On the other hand, the cost of any integral feasible solution should incur a positive cost since at least one client in $\{j_1, j_2, j_3\}$ should be matched with one in $\{k_1, k_2, k_3\}$, implying the unbounded integrality gap of this modified LP.

Next, for \eqref{lp:flm:flow}, one can also imagine an LP where \eqref{lp:flm:flow} is substituted by a weaker set of constraints that
\[
    x_{i, e} \leq y_i \text{ for every $i \in F$ and $e \in E$}.
\]
It is easy to see that this LP is also a feasible LP relaxation for \FLM since this set of constraints successfully models that a facility $i \in F$ should be open whenever there exists a pair $e \in E$ assigned to $i$.
Consider however an instance of a single unit-cost facility $i$ and $n$ mutually compatible clients co-located altogether, where $n$ is a sufficiently large even number.
The cost of any integral feasible solution should be 1 since we should integrally open the facility to serve the clients.
On the other hand, setting $x_{i,e} = \frac{1}{n-1}$ for every $e \in E$ lets us fractionally open $i$ with $y_i = \frac{1}{n-1}$, incurring only $\frac{1}{n-1}$.
This demonstrates the unbounded integrality gap of this weakened LP.
\section{3.868-approximation algorithm for \FLM} \label{sec:maxmat}
In this section, we present a $3.868$-approximation algorithm for \flm.
In Section~\ref{sec:maxmat:reroute}, we begin with describing a key technical subroutine called \Reroute that, given any feasible solution to \eqref{lp:flm} and a maximum matching $M$ in the compatibility graph, reroutes the (fractional) assignment to make it supported by only $M$ at a small additional cost while satisfying the feasibility to \eqref{lp:flm} with facility openings scaled by a factor of 2.
Equipped with this subroutine, we then present in Section~\ref{sec:maxmat:mainalg} our approximation algorithm for \FLM.
We also argue that, if the compatibility graph is perfectly matchable, our algorithm can be further improved to have an approximation ratio of $2.373$ by slightly modifying \Reroute.

\subsection{Subroutine \Reroute} \label{sec:maxmat:reroute}

\paragraph{Subroutine description.}
See Algorithm~\ref{alg:sub}
for a pseudocode of this subroutine. 
We are given a feasible solution $(x, y) \in \R^{F \times E}_+ \times \R^F_+$ to \eqref{lp:flm} and a maximum matching $M \subseteq E$ in the compatibility graph $G$.
Throughout the execution, let $\xtilde \in \R^{F \times E}_+$ be the solution maintained by the subroutine. 
We also maintain a decomposition of $\{\xtilde_e \}_{e \in E}$ into a convex combination of maximum matchings in $G$, i.e.,
$
    \textstyle \xtilde_e = \sum_{M' \in \calM} \gamma_{M'} \cdot \chi^{M'}_e
$
for every $e \in E$.
We initially have $\xtilde := x$; due to Proposition~\ref{prop:prelim:concom}, we can assume that $\gamma$ is supported by $O(|E|)$ maximum matchings at the very start of the execution.

The subroutine runs in iterations until $\gamma_M = 1$.
In each iteration where $\gamma_M < 1$ at the beginning, let $M' \neq M$ be any maximum matching other than $M$ in the support of $\gamma$, i.e., $\gamma_{M'} > 0$.
Let $P := e_1, \ldots, e_\ell$ be any (maximal) alternating path (or cycle) in $M \mathbin{\triangle} M'$ such that $\ell$ is even, every odd-indexed edge is in $M'$ (i.e., $\{e_1, e_3, \ldots, e_{\ell-1}\} \subseteq M'$), and every even-indexed edge is in $M$ (i.e., $\{ e_2, e_4, \ldots, e_\ell \} \subseteq M$).
For each $e' \in M' \cap P$, we choose a facility $i_{e'} \in F$ servicing a positive amount to $e'$ in $\xtilde$ (i.e., $\xtilde_{i_{e'}, e'} > 0$), and let $\varepsilon > 0$ be the minimum between $\gamma_{M'}$ and the minimum amount serviced by the chosen facilities among $M' \cap P$, i.e., 
$
    \textstyle \varepsilon := \min \{\gamma_{M'}, \min_{e' \in M' \cap P} \{\xtilde_{i_{e'}, e'} \}\}. 
$
We then reroute $\xtilde$ along $P$ as follows: for each $e' \in M' \cap P$, we decrease $\xtilde_{i_{e'}, e'}$ by $\varepsilon$ and increase $\xtilde_{i_{e'}, e}$ by $\varepsilon$, where $e \in M \cap P$ denotes the next edge to $e'$ in the order of $P$. See Fig.~\ref{fig:maxmat:reroute} for an illustration of this rerouting.
We also update $\gamma$ accordingly, i.e., $\gamma_{M'}$ gets decreased by $\varepsilon$ whereas $\gamma_{M' \mathbin{\triangle} P}$ gets increased by $\varepsilon$.
The final output of the subroutine is $(\xtilde, 2y)$.

\begin{algorithm}[t]
\caption{$\Reroute((x, y), M)$} \label{alg:sub}
\begin{algorithmic}[1]
    \State $\xtilde \gets x$
    \State Let $\gamma \in \R^\calM_+$ be the coefficient vector of a decomposition of $\{ \xtilde_e \}_{e \in E}$ into a convex combination of maximum matchings
    \While{$\gamma_M < 1$}
        \State Let $M' \neq M$ be any maximum matching with $\gamma_{M'} > 0$.
        \State Find any (maximal) alternating path (or cycle) $P$ in $M \mathbin{\triangle} M'$
        \For{each $e' \in M' \cap P$}
            \State Choose a facility $i_{e'}$ with $\tilde{x}_{i_{e'}, e'} > 0 $
        \EndFor
        \State $\varepsilon \gets \min\{\gamma_{M'}, \min_{e' \in M' \cap P} \{\tilde{x}_{i_{e'}, e'}\}\}$
        \For{each $e' \in M' \cap P$}
            \State Let $e \in M \cap P$ be the next edge to $e'$ in the order of $P$
            \State $\xtilde_{i_{e'}, e'} \gets \xtilde_{i_{e'}, e'} - \varepsilon$
            \State $\xtilde_{i_{e'}, e} \gets \xtilde_{i_{e'}, e} + \varepsilon$
        \EndFor
        \State $\gamma_{M'} \gets \gamma_{M'} - \varepsilon$
        \State $\gamma_{M' \triangle P} \gets \gamma_{M' \triangle P} + \varepsilon$
    \EndWhile
    \State \Return $(\tilde{x}, 2y)$
\end{algorithmic}  
\end{algorithm}

\def\cligap{1.5}
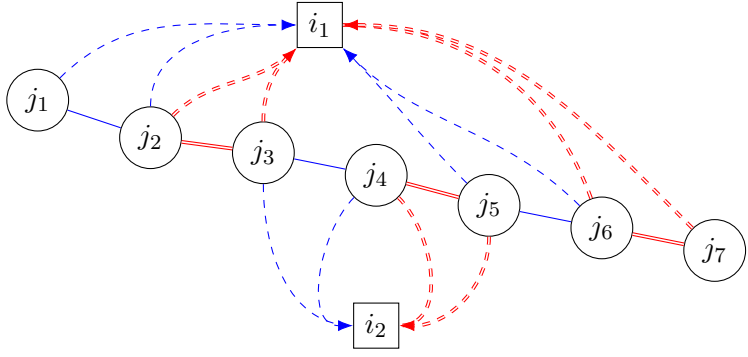
\begin{figure}[t]
    \centering
    \begin{tikzpicture}[
        fac/.style={
            draw=black,
            minimum width= 0.6cm,
            minimum height= 0.6cm
        },
        cli/.style={
            circle,
            draw=black,
            minimum width= 0.6cm,
            minimum height= 0.6cm
        },
        assigntip/.style={
            -{Latex[length=2mm]}
        },
        assign/.style={
            dashed
        },
        Morgn/.style={
            double,
            draw=red
        },
        Mprime/.style={
            draw=blue
        }
    ]        
        \node[fac] (i1) at (2.5*\cligap, 2) {$i_1$};
        \node[fac] (i2) at (3*\cligap, -2) {$i_2$};

        \node[cli] (j1) at (0*\cligap, 1.0) {$j_1$};
        \node[cli] (j2) at (1*\cligap, 0.5) {$j_2$};
        \node[cli] (j3) at (2*\cligap, 0.3) {$j_3$};
        \node[cli] (j4) at (3*\cligap, 0.0) {$j_4$};
        \node[cli] (j5) at (4*\cligap, -0.4) {$j_5$};
        \node[cli] (j6) at (5*\cligap, -0.7) {$j_6$};
        \node[cli] (j7) at (6*\cligap, -1.0) {$j_7$};

        \draw[Mprime] (j1) -- (j2);
        \draw[Mprime] (j3) -- (j4);
        \draw[Mprime] (j5) -- (j6);

        \draw[-] (j1) edge[assign, Mprime, out=45, in=180] (i1);
        \draw[assigntip] (j2) edge[assign, Mprime, out=90, in=180] (i1);
        \draw[-] (j3) edge[assign, Mprime, out=-90, in=180] (i2);
        \draw[assigntip] (j4) edge[assign, Mprime, out=-135, in=180] (i2);
        \draw[-] (j5) edge[assign, Mprime, out=135, in=-45] (i1);
        \draw[assigntip] (j6) edge[assign, Mprime, out=135, in=-45] (i1);

        \draw[Morgn] (j2) -- (j3);
        \draw[Morgn] (j4) -- (j5);
        \draw[Morgn] (j6) -- (j7);

        \draw[-] (j2) edge[assign, Morgn, out=45, in=225] (i1);
        \draw[assigntip] (j3) edge[assign, Morgn, out=90, in=225] (i1);
        \draw[-] (j4) edge[assign, Morgn, out=-45, in=0] (i2);
        \draw[assigntip] (j5) edge[assign, Morgn, out=-90, in=0] (i2);
        \draw[-] (j6) edge[assign, Morgn, out=110, in=0] (i1);
        \draw[assigntip] (j7) edge[assign, Morgn, out=135, in=0] (i1);
    \end{tikzpicture}
    \caption{Illustration of rerouting along an alternating path $P$. Facilities and clients are depicted by squares and circles, respectively. The alternating path is depicted by solid edges, where the blue single-line edges correspond to $P \cap M'$ and the red double-line edges correspond to $P \cap M$. The blue dotted single-line arrow represents a positive fractional assignment of each edge in $P \cap M'$ while the red dotted double-line arrow represents the rerouted assignment for each edge in $P \cap M$.}
    \label{fig:maxmat:reroute}

\end{figure}

\paragraph{Analysis.}
We aim at proving the following lemma.
\begin{lemma} \label{lem:maxmat:reroute}
    Given a feasible solution $(x, y)$ to \eqref{lp:flm} and a maximum matching $M \subseteq E$ in the compatibility graph $G = (V, E)$, \Reroute returns in polynomial time a solution $(\xtilde, \ytilde)$ satisfying
    \begin{itemize}
        \item $(\xtilde, \ytilde)$ is feasible to \eqref{lp:flm} where $\xtilde$ is supported by only $M$;
        \item $\open(\ytilde) \leq 2 \cdot \open(y)$ and $\connflm(\xtilde) \leq 2 \cdot \connflm(x) + \sum_{e \in M} d(e)$. 
    \end{itemize}
\end{lemma}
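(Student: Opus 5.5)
The plan is to prove four things separately: that the loop invariants keep $(\xtilde, 2y)$ feasible, that the cost bound holds, that the output is supported on $M$, and that the subroutine runs in polynomial time. The central invariant is that after every iteration $\xtilde_e = \sum_{M''} \gamma_{M''} \chi^{M''}_e$ for all $e \in E$, with $\gamma$ a convex combination of maximum matchings.

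\textbf{Decomposition invariant and support.} Suppose the invariant holds before an iteration. Since $P$ has even length $\ell$ and alternates between $M'$ and $M$, the set $M' \mathbin{\triangle} P$ is a matching of the same size as $M'$, hence a maximum matching. Along $P$, each $e' \in M' \cap P$ loses exactly $\varepsilon$ in total (only $\xtilde_{i_{e'},e'}$ is decreased), and each $e \in M \cap P$ gains exactly $\varepsilon$. This is exactly the effect of moving weight $\varepsilon$ from $M'$ to $M' \mathbin{\triangle} P$. Nonnegativity is preserved by the choice of $\varepsilon$. Hence $\{\xtilde_e\}_{e\in E} \in \PMM(G)$ throughout. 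When the loop stops we have $\gamma_M = 1$, so $\xtilde_e = \I[e \in M]$ and $\xtilde$ is supported only on $M$.

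\textbf{Cost.} Consider one unit of mass moved at facility $i$ from $e' = \{u,v\}$ to the next edge $e = \{v,w\} \in M$. By the triangle inequality,
\[
d(i,e) = d(i,v) + d(i,w) \le 2 d(i,v) + d(e) \le 2 d(i,e') + d(e).
\]
A unit of mass is moved only while it sits on an edge of $M' \setminus M$. After it has been moved it sits on an edge of $M$, and edges of $M$ never lie in $M' \cap P$ because $P \subseteq M \mathbin{\triangle} M'$. So every unit of original $x$-mass is moved at most once. Mass that is never moved costs at most what it cost in $x$. Moved mass costs at most twice its original cost plus $d(e)$ per unit that lands on $e \in M$. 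The total mass landing on $e$ is at most the final $\xtilde_e = 1$. Summing over all mass gives $\connflm(\xtilde) \le 2\,\connflm(x) + \sum_{e\in M} d(e)$. The bound $\open(2y) = 2\,\open(y)$ is immediate.

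\textbf{Constraint \eqref{lp:flm:flow} with $2y$.} Fix a facility $i$ and a client $j$, and split $\sum_{e \in \delta(j)} \xtilde_{i,e}$ according to where the mass came from.
\begin{itemize}
\item Mass originally on $\delta(j)$ at $i$ either stays in $\delta(j)$ or leaves it. Mass that arrives at an edge of $\delta(j)$ through the shared vertex $j$ also comes from $\delta(j)$. Together these contribute at most $\sum_{e\in\delta(j)} x_{i,e} \le y_i$.
\item The only other source is mass arriving at the unique edge $e = \{v,j\} \in M \cap \delta(j)$ through its other endpoint $v$. Such mass originates from edges of $\delta(v)$ at $i$, and each unit moves at most once. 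So this contribution is at most $\sum_{e'\in\delta(v)} x_{i,e'} \le y_i$.
\end{itemize}
The total is therefore at most $2y_i$. I would write this accounting carefully by tracking each unit of mass by its origin edge.

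\textbf{Running time (main obstacle).} Each iteration either sets $\gamma_{M'}$ to zero or sets some $\xtilde_{i_{e'},e'}$ with $e' \notin M$ to zero. An entry of the second kind never increases again, since only entries on $M$-edges are increased; hence there are at most $|F||E|$ iterations of the second kind. The difficulty is that each iteration may add a new matching $M' \mathbin{\triangle} P$ to the support of $\gamma$. I would use the potential $\Phi = \sum_{M''} \gamma_{M''}\, |M'' \setminus M|$, which drops by $\varepsilon \cdot |P \cap M'| \ge \varepsilon$ in every iteration. Note that $\Phi = \sum_{e \notin M} \xtilde_e$ and that $\Phi$ is bounded by $|E|$. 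Alternatively, I would argue that $M' \mathbin{\triangle} P$ shares strictly more edges with $M$ than $M'$ does. Then, by processing the whole of $M \mathbin{\triangle} M'$ path by path, each original support matching spawns at most $O(|V|)$ descendants before it reaches $M$. Combined with the $O(|E|)$ initial support from Proposition~\ref{prop:prelim:concom}, this gives a polynomial bound on the number of iterations.
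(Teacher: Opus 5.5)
Your feasibility, cost, and support arguments are the paper's proof. The paper's transfer amounts $\tau^i_{e'\to e}$ are your per-unit bookkeeping. Your split of $\sum_{e\in\delta(j)}\xtilde_{i,e}$ into mass entering through $j$ versus through its $M$-partner is, if anything, more careful than the paper's.

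The gap is in the running time, which you flag as the obstacle but do not close.
\begin{itemize}
\item The $\gamma$-weighted potential $\sum_{M''}\gamma_{M''}|M''\setminus M|=\sum_{e\notin M}\xtilde_e$ drops by only $\varepsilon$ per iteration. Since $\varepsilon$ has no lower bound, this bounds nothing.
\item The descendant count fails as stated. An iteration with $\varepsilon<\gamma_{M'}$ keeps $M'$ in the support while adding $M'\mathbin{\triangle}P$, so lineages branch. Under arbitrary choices, the descendants of one original $M'$ are the matchings $M'\mathbin{\triangle}\bigcup\mathcal{S}$ for families $\mathcal{S}$ of components of $M\mathbin{\triangle}M'$, and every such iteration can create a new one. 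Their number is limited only by the number of such iterations (up to $|F||E|$), not by $O(|V|)$. So ``$O(|V|)$ descendants per original matching times $O(|E|)$ originals'' does not account for all iterations.
\end{itemize}

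The fix uses only ingredients you already have, and it is the paper's potential. Track the \emph{unweighted} integer $\Psi:=\sum_{M'':\gamma_{M''}>0}|M''\setminus M|$.
\begin{itemize}
\item If $\varepsilon=\gamma_{M'}$, then $M'$ leaves the support and at most $M'\mathbin{\triangle}P$ enters. Since $|(M'\mathbin{\triangle}P)\setminus M|<|M'\setminus M|$, $\Psi$ drops by at least $1$.
\item If $\varepsilon<\gamma_{M'}$, then $\Psi$ rises by at most $|E|$. You already showed there are at most $|F||E|$ such iterations, since each permanently zeroes an entry $\xtilde_{i,e'}$ with $e'\notin M$.
\end{itemize}
Hence the number of iterations is at most $\Psi_0+|F||E|(|E|+1)$, with $\Psi_0=O(|E|^2)$ by Proposition~\ref{prop:prelim:concom}. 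The paper packages this as the single integer potential $(|E|+1)\cdot|\{(i,e')\in F\times(E\setminus M):\xtilde_{i,e'}>0\}|+\Psi$. It strictly decreases in every iteration, for any choice of $M'$, $P$, and $i_{e'}$.

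Alternatively, your chain picture becomes valid if you fix the choice rule: keep the same $M'$ and the same $P$ until $\gamma_{M'}=0$. Each such round ends with $M'$ replaced by $M'\mathbin{\triangle}P$, so $\Psi$ strictly decreases from round to round. That gives at most $\Psi_0$ rounds, plus at most $|F||E|$ intermediate iterations in total.
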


We first show the feasibility of the output.
\begin{lemma} \label{lem:maxmat:reroute:feas}
    In any iteration of \Reroute, $(\xtilde, 2y)$ is feasible to \eqref{lp:flm}.
\end{lemma}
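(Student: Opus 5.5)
We prove Lemma~\ref{lem:maxmat:reroute:feas} by induction over the iterations, maintaining an invariant stronger than feasibility. For each client $j \in V$ and facility $i \in F$, let $L_{i,j} := \sum_{e \in \delta(j)} \xtilde_{i,e}$ denote the current load. The invariant has three parts: (a) $\gamma$ remains a convex combination of maximum matchings with $\xtilde_e = \sum_{M'} \gamma_{M'} \chi^{M'}_e$ for every $e \in E$; (b) $\xtilde \ge 0$; and (c) $L_{i,j} \le x\text{-load}_{i,j} + x\text{-load}_{i,j'}$ for some client $j'$, or more simply $L_{i,j} \le 2 y_i$. Parts (a) and (b) give constraints \eqref{lp:flm:pmm} and \eqref{lp:flm:nonneg}, and part (c) gives constraint \eqref{lp:flm:flow} with $2y$. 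All three hold initially, since $\xtilde = x$ and $(x,y)$ is feasible.

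\textbf{Parts (a) and (b).} Since $\varepsilon \le \gamma_{M'}$, the coefficients stay nonnegative and still sum to $1$. The set $M' \mathbin{\triangle} P$ is a matching of the same size as $M'$: because $P$ is a maximal component of even length, flipping it keeps a matching and preserves cardinality. Hence $M' \mathbin{\triangle} P$ is maximum. On each $e' \in M'\cap P$ the aggregate $\xtilde_{e'}$ drops by $\varepsilon$, and on each $e \in M\cap P$ it rises by $\varepsilon$. This matches exactly the change in $\sum \gamma \chi$, because $M'$ loses $P \cap M'$ and gains $P\cap M$. Nonnegativity follows from $\varepsilon \le \xtilde_{i_{e'},e'}$.

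\textbf{Part (c), the main obstacle.} Naively, a client $j$'s load at facility $i$ could grow over many iterations. So I would track where flow originated and aim to show that every unit of $\xtilde_{i,e}$ with $e=\{u,v\} \in M$ is charged to original flow $x_{i,\cdot}$ on an edge incident to $u$ or an edge incident to $v$. In a single rerouting, $e' = \{a,u\}$ and $e = \{u,v\}$ share $u$, so facility $i_{e'}$ moves $\varepsilon$ from $e'$ to $e$. Client $u$'s load at $i_{e'}$ is unchanged; client $a$'s load decreases; client $v$'s load increases. I would try the invariant that, for every $i$ and $j$, $L_{i,j}$ is at most the original load of $j$ plus the original load of $j$'s $M$-partner, each bounded by $y_i$, giving $2y_i$.

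To make this work, I would argue that flow on $M$-edges never moves again. Rerouting only decreases flow on $M'\cap P$ edges, which are not in $M$. Flow on a non-$M$ edge incident to $u$ is therefore only ever shifted onto $u$'s unique $M$-edge. Consequently, for $j$ with $M$-partner $p$, the load $L_{i,j}$ consists of the remaining non-$M$ flow at $j$ (at most the original load at $j$, which is $\le y_i$) plus the flow on $\{j,p\}$. The latter comes from $j$'s original non-$M$ flow or $p$'s original non-$M$ flow, plus the original flow on $\{j,p\}$. For this to be bounded, the flow moved from $j$'s side stays counted within $j$'s budget, since total load at $j$ from $j$-sourced flow never increases. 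Thus $L_{i,j} \le (\text{orig load of } j) + (\text{orig load of } p) \le 2y_i$, where an unmatched $j$ only ever loses flow. I expect the delicate point to be this bookkeeping, and in particular showing that a cycle or path never returns the flow to a third edge.
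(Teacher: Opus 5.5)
Your proposal is correct and follows essentially the paper's argument. Constraints \eqref{lp:flm:pmm} and \eqref{lp:flm:nonneg} come from the choice of $\varepsilon$, with $M' \mathbin{\triangle} P$ again a maximum matching. Constraint \eqref{lp:flm:flow} comes from bounding the load of $j$ at $i$ by its original load plus the original non-$M$ load of its $M$-partner $p$, while $M$-exposed clients only lose flow. Your split between flow entering $\{j,p\}$ through $j$ (load-neutral for $j$) and flow entering through $p$ states more precisely the paper's somewhat loose claim that the increment of $\xtilde_{i,\{j,k\}}$ is bounded by the flow on $\delta(k)$. Note that $L_{i,j}\le 2y_i$ alone is not an invariant you can induct on; it is this cumulative accounting that carries the proof.
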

\begin{proof}
    The fact that \eqref{lp:flm:pmm} and \eqref{lp:flm:nonneg} are satisfied follows from the choice of $\varepsilon = \min\{ \gamma_{M'}, \min_{e' \in M' \cap P} \{\xtilde_{i_{e'}, e'} \}\}$.
    To see \eqref{lp:flm:flow} is also satisfied, we fix a facility $i \in F$ and a client $j \in V$. 
    If $j$ is unmatched by $M$, note that every $\xtilde_{i, e}$ for every $e \in \delta(j)$ only decreases by the execution of the subroutine, implying that \eqref{lp:flm:flow} remain satisfied.
    Let us thus assume from now that $j$ is matched by $M$ with, say, $k \in V$.
    In this case, $\xtilde_{i, \{j, k\}}$ can increase, but this increment is always bounded from above by $\sum_{e' \in \delta(k) : e' \neq \{j, k\}} x_{i, e'}$ because the subroutine takes an alternating path (or cycle) at every iteration.
    We can thus deduce that
    \[
        \textstyle \sum_{e \in \delta(j)} \xtilde_{i, e}
        \leq \sum_{e \in \delta(j)} x_{i, e} + \sum_{e' \in \delta(k)} x_{i, e'}
        \leq 2 y_i,
    \]
    where the second inequality is due to the feasibility of the initial input $(x, y)$.
\end{proof}

Next, we argue that \Reroute returns a solution supported by $M$ in polynomial time.
\begin{lemma} \label{lem:maxmat:reroute:suppM}
    At termination, $\xtilde$ is supported by $M$.
    Moreover, \Reroute runs in polynomial time.
\end{lemma}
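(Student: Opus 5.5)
The first claim follows directly from the loop's termination condition. When the while loop ends we have $\gamma_M = 1$, and because $\gamma$ stays a convex combination throughout, every other coefficient is $0$. So $\xtilde_e = \sum_{M'} \gamma_{M'} \chi^{M'}_e = \chi^M_e$. For $e \notin M$ this gives $\xtilde_e = \sum_{i} \xtilde_{i,e} = 0$, and nonnegativity (Lemma~\ref{lem:maxmat:reroute:feas}) then forces $\xtilde_{i,e} = 0$ for every $i$. Hence $\xtilde$ is supported by $M$.

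Two invariants have to be checked for this to work. First, $\gamma$ remains a valid convex combination of maximum matchings. Every iteration moves exactly $\varepsilon$ mass from $M'$ to $M' \mathbin{\triangle} P$. Since $P$ is a maximal alternating component of $M \mathbin{\triangle} M'$ containing equally many edges of $M$ and of $M'$, the set $M' \mathbin{\triangle} P$ is again a matching of the same size, so it is maximum. Second, we need $\sum_i \xtilde_{i,e} = \sum_{M'} \gamma_{M'} \chi^{M'}_e$ after each update. This holds because each edge of $P$ has its $\xtilde_e$ changed by exactly $\mp\varepsilon$, matching the change in $\gamma$. One detail: the algorithm needs an alternating component $P$ with an even number of edges, with the first edge in $M'$ and the last in $M$. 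Components with unequal counts cannot occur, because both matchings are maximum: an odd path whose ends lie in $M$ would be augmenting for $M'$, and one whose ends lie in $M'$ would be augmenting for $M$. Every nonempty component of $M \mathbin{\triangle} M'$ is therefore an even path or an even cycle, and $M' \ne M$ ensures such a component exists.

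For the running time I would use a potential argument. Define $\Phi$ as the number of maximum matchings $N \ne M$ with $\gamma_N > 0$, together with the number of positive entries of $\xtilde$ lying off $M$. In each iteration, $\varepsilon$ attains one of the minima in its definition. If $\varepsilon = \gamma_{M'}$, then $M'$ leaves the support of $\gamma$. Otherwise some entry $\xtilde_{i_{e'}, e'}$ with $e' \in M' \setminus M$ becomes $0$, and entries off $M$ can only decrease, since increases happen only on edges of $M$.

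The main obstacle is that $M' \mathbin{\triangle} P$ may be a new matching entering the support of $\gamma$, so the support can grow. To control this, I would always process the chosen $M'$ until it leaves the support, repeatedly picking components $P$ of $M \mathbin{\triangle} M'$. Each new matching $M' \mathbin{\triangle} P$ is strictly closer to $M$, in the sense that $|(M' \mathbin{\triangle} P) \setminus M| < |M' \setminus M|$. So each original matching generates a chain of descendants of length at most $|E|$, each descendant is again handled the same way, and the total number of matchings ever created is at most polynomial in $O(|E|)$ per original matching. More precisely, I would bound the process by charging each iteration to either an off-$M$ entry of $\xtilde$ becoming zero, of which there are at most $|F||E|$ and which never become positive again, or to a matching leaving the support. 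Combined with the fact that each decomposition begins with $O(|E|)$ matchings (Proposition~\ref{prop:prelim:concom}) and that each iteration creates at most one new matching, this gives a polynomial bound on the number of iterations. Each iteration clearly takes polynomial time.
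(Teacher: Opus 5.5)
Your proof of the support claim is correct, and more explicit than the paper's, which reads it off from its potential being $0$. Once $\gamma_M=1$, the invariant $\xtilde_e=\sum_N\gamma_N\chi^N_e$ forces $\xtilde_{i,e}=0$ for $e\notin M$. Your checks that every component of $M\mathbin{\triangle}M'$ is even and that $M'\mathbin{\triangle}P$ is again a maximum matching are also right.

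The gap is in the running-time part. Your unweighted potential need not decrease, as you note, and neither repair fixes this.
\begin{itemize}
\item The descendant-chain remark bounds only the \emph{depth} of the tree of matchings. Whenever $\varepsilon$ is attained by some $\xtilde_{i_{e'},e'}<\gamma_{M'}$, the matching $M'$ stays in the support while $M'\mathbin{\triangle}P$ is added. So a matching can spawn several children, each of which can spawn more. Depth at most $|E|$ alone gives no polynomial bound, so ``polynomial in $O(|E|)$ per original matching'' is unsupported.
\item The ``more precisely'' charging is circular. The number $L$ of matchings that ever leave the support is bounded only by $O(|E|)$ plus the number of matchings ever created. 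That is at most the number $T$ of iterations, so you get $T\le |F||E|+L\le |F||E|+O(|E|)+T$.
\item \Reroute allows an arbitrary $M'$ in each iteration, so fixing a processing order would prove the bound only for one implementation.
\end{itemize}

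The missing idea is to use the decrease of $|N\setminus M|$ quantitatively and trade it against the entries that get killed. The paper uses $\Phi=(|E|+1)\cdot|\{(i,e')\in F\times(E\setminus M):\xtilde_{i,e'}>0\}|+\sum_{N:\gamma_N>0}|N\setminus M|$.
\begin{itemize}
\item If $\varepsilon=\gamma_{M'}$, the first term does not increase. The second drops by at least $|P\cap M'|\ge 1$, because $|(M'\mathbin{\triangle}P)\setminus M|=|M'\setminus M|-|P\cap M'|$.
\item Otherwise an off-$M$ entry becomes $0$, which lowers the first term by at least $|E|+1$, while the second term rises by at most $|E|$.
\end{itemize}
Hence $\Phi$, a nonnegative integer that is initially $O(|F||E|^2)$, strictly decreases for every choice of $M'$, $P$ and $i_{e'}$.

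In your lineage language the same fix reads as follows. A new element enters the support without another leaving only in entry-killing iterations, and there are at most $|F||E|$ of these. Every other iteration replaces a matching by a strictly closer one, which can happen at most $\nu$ times along any lineage. This gives $T\le |F||E|+\nu\,(O(|E|)+|F||E|)$.
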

\begin{proof}
    We define the \emph{potential} of $\xtilde \in \R^{F \times E}_+$ and its coefficient vector $\gamma \in \R^\calM_+$, denoted by $\Phi(\xtilde, \gamma)$, as follows:
    \begin{align*}
        \Phi(\xtilde, \gamma) 
        &
        := (|E|+1) \cdot |\{ (i, e') \in F \times (E \setminus M) : \xtilde_{i, e'} > 0 \}| 
        \\&
        \textstyle \qquad + \sum_{M' \in \calM : \gamma_{M'} > 0} |M' \setminus M|.
    \end{align*}
    Intuitively speaking, the potential incorporates the following two quantities:
    \begin{itemize}
        \item the number of tuples $(i, e') \in F \times (E \setminus M)$ such that compatible pair $e$ outside the input maximum matching $M$ is serviced by facility $i$ with a positive amount, multiplied by $(|E| + 1)$;
        \item the total number of edges in $M' \setminus M$ across all $M' \in \calM$ supporting $\gamma$.
    \end{itemize}
    Note that the potential is a nonnegative integer, and that the potential of the initial input is bounded from above by $O(|F|\cdot|E|^2)$ due to Proposition~\ref{prop:prelim:concom}.
    Furthermore, the number of nonzero entries in $\gamma$ is no greater than (the second term of) $\Phi(\xtilde, \gamma)$.
    Lastly, if $\Phi(\xtilde, \gamma) = 0$, this implies
    $
        \xtilde_{i, e'} = 0
    $
    for every $i \in F$ and $e' \in E \setminus M$, 
    meaning that $\xtilde$ is supported by $M$.

    It thus suffices to show that $\Phi(\xtilde, \gamma)$ only strictly decreases over iterations.
    Observe that, for every $i \in F$ and $e' \in E \setminus M$, $\xtilde_{i, e'}$ can only decrease in every iteration, and hence, the first term of the potential never increases.
    
    Fix an iteration.
    If $\varepsilon = \gamma_{M'}$, then $\gamma_{M'}$ becomes 0 while it may introduce $M' \mathbin{\triangle} P$ to the support of $\gamma$.
    However, since $P$ is an alternating path (or cycle) in $M \mathbin{\triangle} M'$, we have $|M' \setminus M| > |(M' \mathbin{\triangle} P) \setminus M|$, yielding that $\Phi(\xtilde, \gamma)$ decreases.
    On the other hand, if $\varepsilon = \min_{e' \in M' \cap P} \{\xtilde_{i_{e'}, e'}\}$, then the second term of the potential can increase by $|(M' \mathbin{\triangle} P) \setminus M| \leq |E|$.
    However, in this case, at least one tuple should be removed from $\{ (i, e') \in F \times (E \setminus M) : \xtilde_{i, e'} > 0 \}$ by the choice of $\varepsilon$, the first term of the potential would decrease by at least $|E| + 1$.
    Hence, $\Phi(\xtilde, \gamma)$ also strictly decreases in this case.
\end{proof}

We now show that the assignment cost of the output is bounded.
\begin{lemma} \label{lem:maxmat:reroute:costbound}
    At termination,
    $
        \textstyle \connflm(\xtilde) \leq 2 \cdot \connflm(x) + \sum_{e \in M} d(e).
    $
\end{lemma}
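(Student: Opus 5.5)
Since the rerouting never alters the total amount a facility sends toward a fixed client's partner, the plan is to charge the cost of every single rerouting step to the original solution $x$ and to the edge $e\in M$ receiving the mass. Consider one unit of mass moved in an iteration: a facility $i=i_{e'}$ stops serving $e'=\{a,b\}\in M'\cap P$ and starts serving the next edge $e=\{b,c\}\in M\cap P$, where $b$ is the vertex shared by $e'$ and $e$. The cost changes by $d(i,e)-d(i,e') = d(i,c)-d(i,a)$, and the triangle inequality gives $d(i,c)\le d(i,b)+d(b,c)$. Therefore
\[
d(i,e)\le d(i,b)+d(e) \le d(i,e') + d(e).
\]
The additive $d(e)$ term is harmless to aggregate. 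At termination we have $\gamma_M=1$, so the total mass ever rerouted onto $e\in M$ is at most $\xtilde_e=1$. Summing over all moves into $e$ then contributes at most $\sum_{e\in M} d(e)$.

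The other term, $d(i,b)$, where $b$ is the endpoint of $e'$ shared with $e$, is the delicate one. A naive bound by $d(i,e')$ would have to be charged against mass that may itself have been rerouted earlier, so it could compound. To avoid this, I would track \emph{original} assignments. Mass of $i$ currently sitting on an edge of $E\setminus M$ is always original mass of $x$, because mass is only ever moved onto edges of $M$. Edges of $M$ that receive mass are never sources of a later move, since only edges in $M'\setminus M$ are decreased. Hence every unit moved is original $x$ mass on an edge $e'\in\delta(b)\setminus M$, and it lands on the $M$-edge $\{b,c\}$ through $b$. The final cost of edge $e=\{b,c\}\in M$ at facility $i$ is then at most:
\begin{itemize}
\item $x_{i,e}\,d(i,e)$ from original mass, plus
\item for moved mass entering through $b$, $\sum_{e'\in\delta(b)\setminus\{e\}} x_{i,e'}\,(d(i,b)+d(e))$, plus
\item the symmetric quantity for mass entering through $c$.
\end{itemize}

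Next I sum these contributions. The terms $x_{i,e'}d(i,b)$ are charged at most once through the endpoint $b$ of $e'$, because $b$ is matched in $M$ to a unique partner. An edge $e'=\{a,b\}$ can thus be charged $d(i,b)$ (via $b$'s $M$-edge) and $d(i,a)$ (via $a$'s $M$-edge), but only for the portion of its mass actually moved in each direction. Since each unit is moved once, the charge is at most $x_{i,e'}\max\{d(i,a),d(i,b)\}\le x_{i,e'}d(i,e')$. Together with the retained original cost, this already gives $\connflm(\xtilde)\le \connflm(x)+\sum_{e\in M}d(e)$, comfortably within the claimed factor $2$. The slack may be needed if I misjudge whether each unit is moved only once.

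The main obstacle is justifying the invariant that moved mass never moves again, and that the bookkeeping is per unit of original mass. This relies on the orientation of $P$, in which odd edges lie in $M'$ and even edges in $M$. I also need the case where a vertex is unmatched by $M$: it can only be the starting endpoint $a$ of the path, which is consistent with the bound. If the once-per-unit claim fails, I would fall back on bounding, via Lemma~\ref{lem:prelim:distpairbound}-style triangle steps, the per-vertex total $\sum_{e'\in\delta(b)} x_{i,e'}d(i,b)$ summed over both endpoints. This gives the factor $2\,\connflm(x)$ exactly as in the statement.
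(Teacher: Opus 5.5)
\textbf{There is a genuine gap.} Your per-move inequality is false, and so is the stronger bound you derive from it. For $e'=\{a,b\}$ and $e=\{b,c\}$ we have $d(i,e)=d(i,b)+d(i,c)$. The triangle inequality $d(i,c)\le d(i,b)+d(b,c)$ only gives $d(i,e)\le 2d(i,b)+d(e)$; you dropped one copy of $d(i,b)$. The inequality $d(i,e)\le d(i,e')+d(e)$ also fails in general.

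Here is a concrete instance:
\begin{itemize}
\item the compatibility graph is the path $v_1v_2v_3$, with collinear, unit-spaced clients;
\item a single facility $i$ sits at $v_1$, with $x_{i,\{v_1,v_2\}}=y_i=1$;
\item $M=\{\{v_2,v_3\}\}$, which is a minimum-cost maximum matching.
\end{itemize}
\Reroute moves the whole unit onto $\{v_2,v_3\}$, so $\connflm(\xtilde)=1+2=3$, whereas $\connflm(x)+\sum_{e\in M}d(e)=1+1=2$. So the bound $\connflm(\xtilde)\le\connflm(x)+\sum_{e\in M}d(e)$ that your argument ``already gives'' is false. The factor $2$ in the statement is not slack: this instance meets the lemma's bound with equality.

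You also placed the risk in the wrong step. The once-per-unit invariant is true and easy to check:
\begin{itemize}
\item every decreased entry lies on an edge of $M'\cap P\subseteq M'\setminus M$;
\item every increased entry lies on an edge of $M$;
\item entries on $M$ are never decreased.
\end{itemize}
Your fallback paragraph therefore addresses a non-issue, and it is too vague to count as a proof.

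The repair is simple. Keep your charging scheme but use $d(i,e)\le 2d(i,b)+d(e)$. An edge $e'=\{a,b\}\notin M$ whose $x_{i,e'}$-mass is split between the $M$-edges at $b$ and at $a$ is then charged at most $2\max\{d(i,a),d(i,b)\}\,x_{i,e'}\le 2d(i,e')\,x_{i,e'}$. The additive terms total at most $\sum_{e\in M}d(e)\,\xtilde_e\le\sum_{e\in M}d(e)$. Together these give exactly the lemma.

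This corrected argument is essentially the paper's proof. The paper records the transferred amounts $\tau^i_{e'\to e}$ and uses $d(i,e)\le 2d(i,e')+d(e)$, obtained from $d(i,\ell)\le d(i,j)+d(j,k)+d(k,\ell)$ and $d(e')\le d(i,e')$.
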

\begin{proof}
    For $e \in M$, let us denote by $N(e) := \{ e' \in E \setminus M \mid e' \cap e \neq \emptyset \}$ the set of adjacent edges unmatched by $M$; similarly, for $e' \in E \setminus M$, we denote by $N(e') := \{e \in M \mid e \cap e' \neq \emptyset\}$ the set of adjacent matched edges.
    For $i \in F$, $e' \in E \setminus M$, and $e \in N(e')$, we further define $\transfer$ as the amount serviced by $i$ transferred from $e'$ to $e$ by the subroutine.
    Since $\xtilde$ is supported by $M$ at termination due to Lemma~\ref{lem:maxmat:reroute:suppM}, we have
    \begin{align}
        x_{i, e'} & \textstyle = \sum_{e \in N(e')} \transfer, && \text{ for all $i \in F$ and $e' \in E \setminus M$;} \label{eq:maxmat:reroute:costbound:xieprime} \\
        \xtilde_{i, e} & \textstyle = x_{i, e} + \sum_{e' \in N(e)} \transfer, && \text{ for all $i \in F$ and $e \in M$}. \label{eq:maxmat:reroute:costbound:xtildeie}
    \end{align}
    Moreover, for any $i \in F$, $e' \in E \setminus M$, and $e \in N(e')$, we have
    \begin{equation} \label{ineq:maxmat:reroute:costbound:dbound}
        d(i, e) \leq 2d(i, e') + d(e)
    \end{equation}
    because, when we say $e' = \{j, k\}$ and $e = \{k, \ell\}$, we can derive
    \begin{align*}
        d(i, e)
        &
        = d(i, k) + d(i, \ell)
        \leq d(i, k) + d(i, j) + d(j, k) + d(k, \ell)
        \\&
        = d(i, e') + d(e') + d(e)
        \leq 2d(i,e') + d(e),
    \end{align*}
    where the first inequality is due to the triangle inequality on $d(i, \ell)$ and the second inequaility is due to Lemma~\ref{lem:prelim:distpairbound}.

    We can now prove the lemma using these (in)equalities. 
    \begin{align*}
        \connflm(\xtilde)
        &
        = \sum_{i \in F} \sum_{e \in E} d(i, e) \, \xtilde_{i, e}
        \stackrel{\textrm{(I)}}{=} \sum_{i \in F} \sum_{e \in M} d(i, e) \, \xtilde_{i, e}
        = \sum_{i \in F} \sum_{e \in M} d(i, e) \, \Big[ x_{i, e} + \sum_{e' \in N(e)} \transfer \Big]
        \\&
        = \sum_{i \in F} \sum_{e \in M} d(i, e) \, x_{i, e} + \sum_{i \in F} \sum_{e' \in E \setminus M} \sum_{e \in N(e')} d(i, e) \, \transfer
        \\&
        \stackrel{\textrm{(II)}}{\leq} \sum_{i \in F} \sum_{e \in M} d(i, e) \, x_{i, e}
        + \sum_{i \in F} \sum_{e' \in E \setminus M} 2d(i, e') \sum_{e \in N(e')} \transfer
        + \sum_{e \in M} d(e) \sum_{i \in F}\sum_{e' \in N(e)} \transfer
        \\&
        \stackrel{\textrm{(III)}}{\leq} \sum_{i \in F} \sum_{e \in E} 2d(i, e) \, x_{i, e}
        + \sum_{e \in M} d(e) \sum_{i \in F}  \sum_{e' \in N(e)} \transfer
        \\&
        \stackrel{\textrm{(IV)}}{\leq} \sum_{i \in F} \sum_{e \in E} 2d(i, e) \, x_{i, e}
        + \sum_{e \in M} d(e) \, \xtilde_e
        \\&
        \stackrel{\textrm{(V)}}{\leq} 2 \cdot \connflm(x) + \sum_{e \in M} d(e),
    \end{align*}
    where (I) follows from the fact that $\xtilde$ is fully supported by $M$, (II) from \eqref{ineq:maxmat:reroute:costbound:dbound}, (III) from \eqref{eq:maxmat:reroute:costbound:xieprime}, (IV) from \eqref{eq:maxmat:reroute:costbound:xtildeie}, and (V) from the fact that $\xtilde_e \leq 1$ for every $e \in E$ due to \eqref{const:pmm:degree} of \eqref{lp:flm:pmm}.
\end{proof}

Lemma~\ref{lem:maxmat:reroute} then immediately follows from Lemmas~\ref{lem:maxmat:reroute:feas}, \ref{lem:maxmat:reroute:suppM}, and \ref{lem:maxmat:reroute:costbound}.

\paragraph{\Reroute for perfectly matchable $G$.}
If the compatibility graph $G$ is perfectly matchable, we can improve the performance of \Reroute by slightly modifying the execution of the subroutine.
We formally state the lemma for this case while deferring its proof to Appendix~\ref{app:perfmat:reroute}.
\begin{lemma} \label{lem:perfmat:reroute}
    Assume the compatibility graph $G = (V, E)$ is perfectly matchable. Given a feasible solution $(x, y)$ to \eqref{lp:flm} and a perfect matching $M$ in $G$, we can modify \Reroute to return in polynomial time a solution $(\xtilde, \ytilde)$ satisfying
    \begin{itemize}
        \item $(\xtilde, \ytilde)$ is feasible to \eqref{lp:flm} where $\xtilde$ is supported by only $M$;
        \item $\open(\ytilde) \leq \open(y)$ and $\connflm(\xtilde) \leq \connflm(x) + \sum_{e \in M} d(e)$. 
    \end{itemize}
\end{lemma}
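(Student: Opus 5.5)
The plan is to replace the iterative path-by-path rerouting with a single, facility-by-facility "half-splitting" step. This step uses the degree equalities \eqref{const:ppm:degree}, which hold because $G$ is perfectly matchable. For $j \in V$, let $e_j \in M$ denote the unique $M$-edge covering $j$. For every $i \in F$, $e = \{j,k\} \in M$ and $e' \in E \setminus M$, I define
\[
\xtilde_{i,e} := x_{i,e} + \tfrac12 \sum_{e' \in \delta(j)\setminus\{e\}} x_{i,e'} + \tfrac12 \sum_{e' \in \delta(k)\setminus\{e\}} x_{i,e'}, \qquad \xtilde_{i,e'} := 0,
\]
and set $\ytilde := y$. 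In words, each non-matching edge $e' = \{a,b\}$ sends half of its $i$-assignment to $e_a$ and half to $e_b$. This is computable in polynomial time and is supported only by $M$ by construction.

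\textbf{Feasibility.} I will check the three constraint groups in turn.
\begin{itemize}
\item Matching constraint \eqref{lp:flm:pmm}. Summing over $i$ gives $\xtilde_e = x_e + \frac12(1 - x_e) + \frac12(1 - x_e) = 1$ for every $e \in M$, by \eqref{const:ppm:degree}. Hence $\{\xtilde_e\} = \chi^M \in \PMM(G)$.
\item Flow constraint \eqref{lp:flm:flow}. Fix $i$ and a client $j$ with $e_j = \{j,k\}$. Writing $x_{i,e_j} = \frac12 x_{i,e_j} + \frac12 x_{i,e_j}$, the definition gives
\[
\sum_{e \in \delta(j)} \xtilde_{i,e} = \xtilde_{i,e_j} = \tfrac12 \sum_{e \in \delta(j)} x_{i,e} + \tfrac12 \sum_{e \in \delta(k)} x_{i,e} \le \tfrac12 y_i + \tfrac12 y_i = y_i .
\]
This is exactly where the perfect-matching structure saves the factor $2$ lost in Lemma~\ref{lem:maxmat:reroute:feas}. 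The averaging over both endpoints, rather than dumping everything onto one neighbour, is the key design choice.
\item Nonnegativity \eqref{lp:flm:nonneg} is immediate.
\end{itemize}
Consequently $\open(\ytilde) = \open(y)$.

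\textbf{Cost.} Take $e' = \{j,k\} \notin M$ with $e_j = \{j,a\}$ and $e_k = \{k,b\}$. The mass $x_{i,e'}$ is moved as $\frac12 x_{i,e'}$ to each of $e_j$ and $e_k$. By the triangle inequality, $d(i,a) \le d(i,j) + d(j,a)$ and $d(i,b) \le d(i,k) + d(k,b)$. Therefore
\[
\tfrac12 d(i,e_j) + \tfrac12 d(i,e_k) \le d(i,e') + \tfrac12 d(e_j) + \tfrac12 d(e_k).
\]
The assignment cost thus incurs factor $1$ instead of $2$, plus a matching-length term. Summing over all $i$ and $e'$, each $e = \{j,k\} \in M$ is charged $d(e)$ times its total received mass. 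That mass is
\[
\tfrac12 \sum_{e' \in \delta(j)\setminus\{e\}} x_{e'} + \tfrac12 \sum_{e' \in \delta(k)\setminus\{e\}} x_{e'} = 1 - x_e \le 1 .
\]
Hence $\connflm(\xtilde) \le \connflm(x) + \sum_{e \in M} d(e)$.

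\textbf{Main obstacle.} The main risk is the flow constraint. A naive per-path rerouting with independently chosen facilities can overload a client at a single facility. Doing the transfer per facility $i$, and symmetrically over both endpoints, avoids this. To stay in the spirit of "modifying \Reroute", I would present the construction as the limit of rerouting along the alternating cycles of $M \mathbin{\triangle} M'$, sending half of each $M'$-edge's flow in each direction around the cycle. The direct formula above is what I would actually verify.
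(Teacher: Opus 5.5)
Your proof is correct. Your formula for $\xtilde_{i,e}$ is exactly the identity \eqref{eq:app:perfmat:reroute:id01} that the paper's proof arrives at; the two arguments differ only in how $\xtilde$ is produced.

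The paper keeps the iterative framework of \Reroute:
\begin{enumerate}
\item it decomposes $\{x_e\}_{e \in E}$ into perfect matchings (Proposition~\ref{prop:prelim:concom});
\item it repeatedly takes an alternating cycle of $M \mathbin{\triangle} M'$, removes $\varepsilon$ from each $M'$-edge at a chosen facility, and splits it equally between that edge's two $M$-neighbours;
\item it proves termination with the potential function of Lemma~\ref{lem:maxmat:reroute:suppM};
\item only then does it observe that the final state satisfies \eqref{eq:app:perfmat:reroute:id01}.
\end{enumerate}
From that identity on, its flow and cost checks proceed as in your write-up. So your construction is not a ``limit'' of that process but its exact output.

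Writing the closed form down directly is more elementary. It needs no convex decomposition and no termination argument. It also makes clear that the output does not depend on the choice of $M'$, cycles, or facilities, and \eqref{lp:flm:pmm} becomes a one-line consequence of \eqref{const:ppm:degree}.

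What the paper's route buys is uniformity with the general case. There some vertices are unmatched by $M$ and the degree constraints are only inequalities. A symmetric half-split formula can then lose mass, since an edge with an $M$-exposed endpoint has only one $M$-neighbour. The alternating-path bookkeeping is what keeps $\xtilde_e = 1$ on $M$ in that setting.

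Finally, your cost step uses the endpoint-wise bound $d(i,e_j) \le 2d(i,j) + d(e_j)$, which is exactly what gives factor $1$ on $\connflm(x)$. The paper's step (III) is true, but it cites \eqref{ineq:maxmat:reroute:costbound:dbound}. That bound's right-hand side $2d(i,e') + d(e)$, averaged over the two neighbours of $e'$, would only give $2d(i,e')$. Your endpoint-wise inequality is the correct justification for that step.
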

\subsection{Main algorithm} \label{sec:maxmat:mainalg}

\paragraph{Algorithm description.}
We are now ready to describe our main algorithm that guarantees an approximation factor $3.868$ for \flm.
See Algorithm~\ref{alg:flm} for a pseudocode of this algorithm.

Given an \FLM instance $(F, V, E, f, d)$, we start with solving \eqref{lp:flm} to obtain an optimal solution $(\xstar, \ystar) \in \R^{F \times E}_+ \times \R^F_+$.
We also find a minimum-cost maximum matching $\Mstar \subseteq E$ in the compatibility graph $G$ where we regard the cost of each edge $e \in E$ as $d(e)$.
We then run $\Reroute((\xstar, \ystar), \Mstar)$; let $(\xtilde, \ytilde)$ denote the output of the subroutine.
As $\xtilde$ is supported only by $\Mstar$ due to Lemma~\ref{lem:maxmat:reroute}, we can regard $\xtilde$ as a vector in $\R^{F \times \Mstar}_+$.

We now construct an auxiliary \UFL instance $(F, \Mstar, f, d')$ where each compatible pair in $\Mstar$ is regarded as a ``meta-client'' and $d' : F \times \Mstar \to \R_+$ is defined as
$
    \textstyle d'(i, e) := d(i, e)
$
for every $i \in F$ and $e \in \Mstar$.
Then, for some $\lambda > 1.678$ to be chosen later, we run a $(\lambda, 1+\frac{2}{e^\lambda})$-approximation LP-rounding algorithm for \UFL, guaranteed by Proposition~\ref{prop:prelim:bifactor}, with $(\xtilde, \ytilde)$ input upon the auxiliary instance.
For $S \subseteq F$ and $\sigma : \Mstar \to S$ output by the LP-rounding algorithm, we return $(S, \Mstar, \sigma)$ as a final solution.
This ends the description of our main algorithm.

\begin{algorithm}[t]
\caption{A $3.868$/$2.373$-approximation algorithm for \FLM}
\label{alg:flm}
\begin{algorithmic}[1]
    \State Solve \eqref{lp:flm} and obtain an optimal solution $(\xstar, \ystar)$ \label{alg:step1}
    \State Compute a minimum-cost maximum matching $\Mstar$ in $G$ with cost being $d$
    \State $(\xtilde, \ytilde) \gets \Reroute((\xstar, \ystar), \Mstar)$
    \State Construct an auxiliary UFL instance $(F, \Mstar, d')$ with $d' = d$ \label{alg:ufl}

    \State Run a $(\lambda, 1+\frac{2}{e^\lambda})$-approximation LP-rouding algorithm for UFL with $(\xtilde, \ytilde)$ on $(F, \Mstar, d')$ to obtain $(S, \sigma)$
    \State \Return $(S, \Mstar, \sigma)$
\end{algorithmic}  
\end{algorithm}

\paragraph{Analysis.}
We state the main theorems of this section:
\begin{theorem} \label{thm:maxmat:main}
    This algorithm is a $\max\{2 \lambda, 3 \, (1 + \frac{2}{e^\lambda})\}$-approximation algorithm for \FLM for $\lambda \geq 1.678$.
    The integrality gap of \eqref{lp:flm} is also bounded from above by $\max\{2 \lambda, 3 \, (1 + \frac{2}{e^\lambda})\}$ for $\lambda \geq 1.678$.
\end{theorem}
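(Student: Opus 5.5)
Our plan is to chain the guarantees of \Reroute (Lemma~\ref{lem:maxmat:reroute}) and of the bifactor LP-rounding algorithm (Proposition~\ref{prop:prelim:bifactor}), and to compare everything against the optimal LP value $\mathrm{OPT}_{\mathrm{LP}} := \open(\ystar) + \connflm(\xstar)$. Since \eqref{lp:flm} is a relaxation (Section~\ref{sec:lp}), $\mathrm{OPT}_{\mathrm{LP}}$ is at most the optimum of \FLM. Polynomial running time is immediate: \eqref{lp:flm} is solvable via Proposition~\ref{prop:prelim:sep}, a minimum-cost maximum matching is polynomial, and \Reroute is polynomial by Lemma~\ref{lem:maxmat:reroute}.

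First we check that $(\xtilde, \ytilde)$, viewed on $\R^{F \times \Mstar}_+ \times \R^F_+$, is feasible to \eqref{lp:ufl} on the auxiliary instance $(F, \Mstar, f, d')$, and that $d'$ satisfies the three-hop inequality. For feasibility, $\xtilde$ is supported only on $\Mstar$, and $\{\xtilde_e\}\in\PMM(G)$ has total mass $\nu = |\Mstar|$ with each $\xtilde_e \le 1$ by \eqref{const:pmm:degree}. Hence $\xtilde_e = 1$ for all $e \in \Mstar$, which gives \eqref{lp:ufl:assign}. For any $e \in \Mstar$ and endpoint $j$ of $e$, the set $\delta(j)$ meets $\Mstar$ only in $e$, so \eqref{lp:flm:flow} yields $\xtilde_{i,e} \le \ytilde_i$, which is \eqref{lp:ufl:open}. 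The three-hop inequality, $d(i,e) \le d(i,e') + d(i',e') + d(i',e)$ for $e=\{j,k\}$ and $e'=\{j',k'\}$, follows by applying the triangle inequality to each endpoint, $d(i,j) \le d(i,j') + d(i',j') + d(i',j)$ and similarly for $k$ via $k'$, and summing. The output $(S,\Mstar,\sigma)$ is then feasible for \FLM, since $\Mstar$ is a maximum matching, and its cost coincides with the \UFL cost on the auxiliary instance.

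Next comes the cost bound. By Proposition~\ref{prop:prelim:bifactor}, the expected cost is at most $\lambda\,\open(\ytilde) + (1+\frac{2}{e^\lambda})\,\connufl(\xtilde)$, and here $\connufl(\xtilde) = \connflm(\xtilde)$. Lemma~\ref{lem:maxmat:reroute} bounds this by
\[
2\lambda\,\open(\ystar) + \Big(1+\frac{2}{e^\lambda}\Big)\Big(2\,\connflm(\xstar) + \sum_{e\in\Mstar} d(e)\Big).
\]
The key step, and the only nontrivial one, is to show $\sum_{e\in\Mstar} d(e) \le \connflm(\xstar)$. We do this by writing $\{\xstar_e\}$ as a convex combination $\sum_M \gamma_M \chi^M$ of maximum matchings using Proposition~\ref{prop:prelim:concom}. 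Optimality of $\Mstar$ then gives $\sum_{e\in\Mstar} d(e) \le \sum_M \gamma_M \sum_{e\in M} d(e) = \sum_e d(e)\,\xstar_e = \sum_e\sum_i d(e)\,\xstar_{i,e}$, and Lemma~\ref{lem:prelim:distpairbound} bounds this by $\connflm(\xstar)$.

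Combining, the expected cost is at most $2\lambda\,\open(\ystar) + 3(1+\frac{2}{e^\lambda})\,\connflm(\xstar) \le \max\{2\lambda, 3(1+\frac{2}{e^\lambda})\}\cdot \mathrm{OPT}_{\mathrm{LP}}$. This gives both the approximation ratio and, since the algorithm rounds an arbitrary LP optimum to an integral solution whose expected cost is within this factor (so some integral solution achieves it), the integrality gap bound.
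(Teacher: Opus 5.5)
Your proposal is correct and follows the paper's proof. You inline the three supporting facts that the paper states separately as Lemmas~\ref{lem:maxmat:threehop}, \ref{lem:maxmat:feasufl}, and \ref{lem:maxmat:Mstarbound}, adding a little more justification: the total-mass argument for $\xtilde_e = 1$ on $\Mstar$, and the explicit convex decomposition for bounding the cost of $\Mstar$. You then chain Lemma~\ref{lem:maxmat:reroute} with Proposition~\ref{prop:prelim:bifactor} exactly as the paper does.
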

\begin{theorem} \label{thm:maxmat:main:perfmat}
    If $G$ is perfectly matchable, this algorithm is a $\max\{\lambda, 2 \, (1 + \frac{2}{e^\lambda})\}$-approximation algorithm for \FLM for $\lambda \geq 1.678$.
    The integrality gap of \eqref{lp:flm} is also bounded from above by $\max\{\lambda, 2 \, (1 + \frac{2}{e^\lambda})\}$ for $\lambda \geq 1.678$ in this case.
\end{theorem}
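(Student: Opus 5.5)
The statement immediately above is Theorem~\ref{thm:maxmat:main:perfmat}, but its proof runs in parallel to that of Theorem~\ref{thm:maxmat:main}, so I would prove both the same way. The argument chains the guarantees of \Reroute with the bifactor rounding. Let $\mathrm{OPT}$ denote the optimal cost of \FLM and $\mathrm{LP}^\star = \open(\ystar) + \connflm(\xstar)$ the optimal value of \eqref{lp:flm}. Since \eqref{lp:flm} is a relaxation, $\mathrm{LP}^\star \le \mathrm{OPT}$.

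The first step is to check that $(\xtilde, \ytilde)$, viewed on the auxiliary instance $(F, \Mstar, f, d')$, is feasible for \eqref{lp:ufl}, and that $d'$ satisfies the three-hop inequality.

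\emph{Feasibility.} Fix $e \in \Mstar$. By Lemma~\ref{lem:maxmat:reroute}, $\xtilde$ is supported on $\Mstar$ and $\{\xtilde_e\}$ lies in $\PMM$. The only maximum matching in the support of the convex combination is then $\Mstar$ itself, so $\xtilde_e = 1$ and \eqref{lp:ufl:assign} holds. For \eqref{lp:ufl:open}, take $j \in e$. Constraint \eqref{lp:flm:flow} gives $\xtilde_{i,e} \le \sum_{e'' \in \delta(j)} \xtilde_{i,e''} \le \ytilde_i$.

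\emph{Three-hop inequality.} This is the step needing the most care. For $e = \{j,k\}$ and $e' = \{j',k'\}$,
\[
d(i,j) \le d(i,j') + d(j',i') + d(i',j),
\]
and similarly for $k$ (via $k'$). Summing gives $d(i,e) \le d(i,e') + d(i',e') + d(i',e)$, as required.

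Next, Proposition~\ref{prop:prelim:bifactor} bounds the expected cost of the output by $\lambda\,\open(\ytilde) + (1+2e^{-\lambda})\,\connufl(\xtilde)$. Here $\connufl(\xtilde) = \connflm(\xtilde)$ because $d' = d$. The returned triple $(S,\Mstar,\sigma)$ is feasible for \FLM, since $\Mstar$ is a maximum matching.

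For the general case, Lemma~\ref{lem:maxmat:reroute} bounds the cost by
\[
2\lambda\,\open(\ystar) + (1+2e^{-\lambda})\Big(2\,\connflm(\xstar) + \sum_{e\in\Mstar} d(e)\Big).
\]
The key remaining bound is $\sum_{e \in \Mstar} d(e) \le \connflm(\xstar)$. To prove it, write $\{\xstar_e\}$ as a convex combination of maximum matchings (Proposition~\ref{prop:prelim:concom}). Minimality of $\Mstar$ then gives
\[
\sum_{e\in\Mstar} d(e) \le \sum_{e\in E} d(e)\,\xstar_e = \sum_{e\in E}\sum_{i\in F} d(e)\,\xstar_{i,e} \le \connflm(\xstar),
\]
where the last inequality is Lemma~\ref{lem:prelim:distpairbound}. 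Hence the cost is at most $2\lambda\,\open(\ystar) + 3(1+2e^{-\lambda})\,\connflm(\xstar)$, which is at most $\max\{2\lambda, 3(1+2e^{-\lambda})\}\cdot\mathrm{LP}^\star$.

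In the perfectly matchable case, Lemma~\ref{lem:perfmat:reroute} replaces the factors $2$ by $1$. The same computation then yields $\max\{\lambda, 2(1+2e^{-\lambda})\}\cdot\mathrm{LP}^\star$.

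For polynomial running time, note that each step is polynomial. \eqref{lp:flm} is solvable by Proposition~\ref{prop:prelim:sep}, a minimum-cost maximum matching can be computed in polynomial time, \Reroute is polynomial by Lemma~\ref{lem:maxmat:reroute}, and the rounding algorithm is polynomial.

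Since the bound is relative to $\mathrm{LP}^\star$ and the algorithm outputs an integral solution, an integral solution of at most that expected cost exists. This gives the same upper bounds on the integrality gap. I expect no real obstacle; the only delicate points are the three-hop verification and noting that $\xtilde_e = 1$ on $\Mstar$.
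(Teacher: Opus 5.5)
Your proposal is correct and matches the paper's argument. The paper also proves this theorem by rerunning the proof of Theorem~\ref{thm:maxmat:main} with \eqref{ineq:maxmat:main:reroutebound} replaced by $\open(\ytilde) \leq \open(\ystar)$ and $\connflm(\xtilde) \leq 2\,\connflm(\xstar)$; these come from Lemma~\ref{lem:perfmat:reroute} (which applies because $\Mstar$ is then a perfect matching) combined with Lemma~\ref{lem:maxmat:Mstarbound}, and your feasibility, three-hop and integrality-gap steps are exactly Lemmas~\ref{lem:maxmat:threehop} and \ref{lem:maxmat:feasufl} and the closing remark of that proof.
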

Therefore, by choosing $\lambda = 1.934$, we can in general guarantee a $3.868$-approximation algorithm as well as the same bound on the integrality gap of \eqref{lp:flm}.
On the other hand, if $G$ is perfectly matchable, by choosing $\lambda=2.373$, the approximation ratio and the bound on the integrality gap are improved to $2.373$.

The following lemmas will be useful in proving the main theorem.
\begin{lemma} \label{lem:maxmat:Mstarbound}
    $\sum_{e \in \Mstar} d(e) \leq \connflm(\xstar)$.
\end{lemma}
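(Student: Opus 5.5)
The plan is to bound $\sum_{e \in \Mstar} d(e)$ by the cost of the fractional matching $\{\xstar_e\}_{e \in E}$ under edge weights $d(e)$, and then bound that cost by the assignment cost using Lemma~\ref{lem:prelim:distpairbound}. The two steps are as follows.

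\emph{Step 1.} By feasibility, \eqref{lp:flm:pmm} gives $\{\xstar_e\}_{e\in E} \in \PMM(G)$. By Proposition~\ref{prop:prelim:concom}, or simply by the definition of $\PMM(G)$ as a convex hull, we can write $\xstar_e = \sum_{M \in \calM} \gamma_M \chi^M_e$ for a convex combination $\gamma$ of maximum matchings. Then
\[
\sum_{e \in E} d(e)\, \xstar_e = \sum_{M \in \calM} \gamma_M \sum_{e \in M} d(e) \geq \min_{M \in \calM} \sum_{e \in M} d(e) = \sum_{e \in \Mstar} d(e).
\]
The last equality holds because $\Mstar$ is a minimum-cost maximum matching with respect to the edge costs $d(e)$.

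\emph{Step 2.} Using $\xstar_e = \sum_{i \in F} \xstar_{i,e}$ and Lemma~\ref{lem:prelim:distpairbound}, which gives $d(e) \le d(i,e)$ for every $i$, we get
\[
\sum_{e\in E} d(e)\,\xstar_e = \sum_{i\in F}\sum_{e\in E} d(e)\,\xstar_{i,e} \le \sum_{i\in F}\sum_{e\in E} d(i,e)\,\xstar_{i,e} = \connflm(\xstar).
\]
Combining the two steps yields the claim.

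There is no real obstacle here. The only point needing care is that the minimum of a linear function over the polytope $\PMM(G)$ is attained at a vertex, i.e.\ at some maximum matching. This follows immediately from the convex-combination representation, so Step 1 goes through directly.
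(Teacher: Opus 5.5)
Your proof is correct and is essentially the paper's argument: the paper first applies Lemma~\ref{lem:prelim:distpairbound} to get $\connflm(\xstar) \geq \sum_{e \in E} d(e)\,\xstar_e$, then uses $\{\xstar_e\}_{e \in E} \in \PMM(G)$ and the minimality of $\Mstar$, which are your two steps in reverse order. Your convex-combination argument only makes explicit the justification of the second step that the paper leaves implicit.
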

\begin{proof}
    By Lemma~\ref{lem:prelim:distpairbound}, we have
    \[
        \connflm(\xstar)
        = \sum_{i \in F} \sum_{e \in E} d(i, e) \, x_{i, e}
        \geq \sum_{e \in E} d(e) \, x_e.
    \]
    Since $\{x_e\}_{e \in E} \in \PMM(G)$ due to \eqref{lp:flm:pmm} and $\Mstar$ is a minimum-cost maximum matching with cost being $d$, the right-hand side is lower-bounded by the cost of $\Mstar$, completing the proof.
 \end{proof}

\begin{lemma} \label{lem:maxmat:threehop}
    The assignment cost function $d' : F \times \Mstar \to \R_+$ of the auxiliary \UFL instance satisfies the three-hop inequality.
\end{lemma}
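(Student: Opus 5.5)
We need to show that for all $i, i' \in F$ and $e, e' \in \Mstar$,
\[
    d'(i, e) \leq d'(i, e') + d'(i', e') + d'(i', e),
\]
that is, $d(i, e) \leq d(i, e') + d(i', e') + d(i', e)$. The plan is to unfold the definition $d(i, e) = d(i, j) + d(i, k)$ for $e = \{j, k\}$, and to bound each of the two summands separately by a three-hop path in the original metric $d$, using only the triangle inequality on $F \cup V$.

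Write $e = \{j, k\}$ and $e' = \{j', k'\}$. For the client $j$, the path $i \to j' \to i' \to j$ and the triangle inequality give
\[
    d(i, j) \leq d(i, j') + d(j', i') + d(i', j).
\]
For the client $k$, the path $i \to k' \to i' \to k$ gives
\[
    d(i, k) \leq d(i, k') + d(k', i') + d(i', k).
\]
We route $j$ through $j'$ and $k$ through $k'$, so that each endpoint of $e'$ is used exactly once.

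Adding the two inequalities yields
\[
    d(i, e) \leq \big[d(i, j') + d(i, k')\big] + \big[d(i', j') + d(i', k')\big] + \big[d(i', j) + d(i', k)\big] = d(i, e') + d(i', e') + d(i', e),
\]
which is the claim.

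The only point that needs care is this pairing of endpoints. Each client of $e$ must be routed through a \emph{distinct} endpoint of $e'$. If both were routed through the same $j'$, the sum would contain $2d(i, j')$, which is not bounded by $d(i, e')$. With the pairing above the bound is immediate, and the argument does not even use that $e, e' \in \Mstar$. Everything else is routine.
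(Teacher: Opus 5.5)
Your proof is correct and is essentially the paper's argument: you unfold $d'(i,e)=d(i,j)+d(i,k)$, bound $d(i,j)$ through $j'$ and $i'$ and $d(i,k)$ through $k'$ and $i'$ by the triangle inequality, and regroup. Your final regrouping into $d(i,e')+d(i',e')+d(i',e)$ is written more cleanly than in the paper, whose displayed middle line repeats $(d(i,j')+d(i,k'))$ where $(d(i',j)+d(i',k))$ is meant.
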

\begin{proof}
    For any $i, i' \in F$ and $e, e' \in \Mstar$, we say $e = \{j, k\}$ and $e' = \{j', k'\}$.
    We then have
    \begin{align*}
        d'(i, e)
        &
        = d(i, j) + d(i, k)
        \\&
        \leq (d(i, j') + d(j', i') + d(i', j)) + (d(i, k') + d(k', i') + d(i', k))
        \\&
        = (d(i, j') + d(i, k')) + (d(i', j') + d(i', k')) + (d(i, j') + d(i, k'))
        \\&
        = d'(i, e') + d'(i', e') + d'(i', e),
    \end{align*}
    where the inequality follows from the triangle inequality of $d$.
\end{proof}

\begin{lemma} \label{lem:maxmat:feasufl}
    $(\xtilde, \ytilde)$ is a feasible solution to \eqref{lp:ufl}.
\end{lemma}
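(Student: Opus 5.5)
We need to check the three constraints of \eqref{lp:ufl} on the auxiliary instance $(F,\Mstar,f,d')$, viewing $\xtilde$ as a vector in $\R^{F\times\Mstar}_+$ (legitimate since, by Lemma~\ref{lem:maxmat:reroute}, $\xtilde$ vanishes outside $\Mstar$) and taking $\ytilde = 2\ystar$. Nonnegativity \eqref{lp:ufl:nonneg} is immediate from \eqref{lp:flm:nonneg} for the feasible solution $(\xtilde,\ytilde)$ of \eqref{lp:flm}.

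For the assignment constraint \eqref{lp:ufl:assign}, I will use the matching-polytope constraint \eqref{lp:flm:pmm}. Let $\xtilde_e = \sum_{i\in F}\xtilde_{i,e}$. Then $\{\xtilde_e\}_{e\in E}\in\PMM(G)$ and $\xtilde_e = 0$ for $e\notin\Mstar$. By \eqref{const:pmm:degree} we have $\xtilde_e\le 1$ for every $e\in\Mstar$, since the edges of $\Mstar$ are disjoint and each lies in some $\delta(v)$ containing no other support edge. By \eqref{const:pmm:size}, $\sum_{e\in\Mstar}\xtilde_e=\nu=|\Mstar|$. The only way these $|\Mstar|$ numbers, each at most $1$, can sum to $|\Mstar|$ is if every one equals $1$. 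Hence $\sum_{i\in F}\xtilde_{i,e}=1$ for every meta-client $e\in\Mstar$. An alternative route is to note that \Reroute terminates with $\gamma_{\Mstar}=1$, so the maintained decomposition gives $\xtilde_e=\chi^{\Mstar}_e$ directly.

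For \eqref{lp:ufl:open}, fix $i\in F$ and $e=\{j,k\}\in\Mstar$. Since $e\in\delta(j)$ and all terms are nonnegative, \eqref{lp:flm:flow} for $(\xtilde,\ytilde)$ gives
\[
\xtilde_{i,e}\le\sum_{e''\in\delta(j)}\xtilde_{i,e''}\le\ytilde_i .
\]

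None of these steps is a real obstacle. The only point needing care is the first one: confirming that each matched edge is assigned exactly one unit in total. That follows from the size constraint together with the degree bounds, as shown above.
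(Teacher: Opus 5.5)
Your proof is correct and follows the same route as the paper: use Lemma~\ref{lem:maxmat:reroute} to view $\xtilde$ on $F \times \Mstar$, derive $\xtilde_e = 1$ for $e \in \Mstar$ from \eqref{lp:flm:pmm}, and get $\xtilde_{i,e} \le \ytilde_i$ from \eqref{lp:flm:flow} at an endpoint of $e$. You also spell out the degree-plus-size argument for $\xtilde_e = 1$, which the paper only asserts, and you correctly bound by $\ytilde_i$ where the paper's text writes $y_i$.
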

\begin{proof}
    Since $\xtilde$ is supported by only $\Mstar$ due to Lemma~\ref{lem:maxmat:reroute}, we can assume $\xtilde \in \R^{F \times \Mstar}_+$.
    We can also guarantee $\xtilde_e = 1$ for every $e \in \Mstar$ due to \eqref{lp:flm:pmm} of \eqref{lp:flm}, implying that $(\xtilde, \ytilde)$ satisfies \eqref{lp:ufl:assign} of \eqref{lp:ufl}.
    To see \eqref{lp:ufl:open} of \eqref{lp:ufl} is also satisfied, let us fix a matched edge $e \in \Mstar$.
    We can observe that, for any incident client $j \in e$,
    \[
        \textstyle \xtilde_{i, e} = \sum_{e' \in \delta(j)} \xtilde_{i, e'} \leq y_i,
    \]
    where the equality is because $\xtilde$ is supported by only $\Mstar$, and the inequality is due to \eqref{lp:flm:flow} of \eqref{lp:flm}.
    Finally, \eqref{lp:ufl:nonneg} is trivially satisfied.
\end{proof}

We are now ready to prove the main theorems.

\begin{proof}[Proof of Theorem~\ref{thm:maxmat:main}]
    Lemmas~\ref{lem:maxmat:threehop} and \ref{lem:maxmat:feasufl} imply that the algorithm is well-defined.
    Let us now turn to analyzing the approximation guarantee of the algorithm.
    By Lemmas~\ref{lem:maxmat:reroute} and \ref{lem:maxmat:Mstarbound}, we have
    \begin{equation} \label{ineq:maxmat:main:reroutebound}
        \textstyle \open(\ytilde) \leq 2 \cdot \open(\ystar)
        \text{ and }
        \connflm(\xtilde) \leq 3 \cdot \connflm(\xstar).
    \end{equation}
    Moreover, as $\xtilde$ is supported by $\Mstar$, the assignment cost incurred by $(\xtilde, \ytilde)$ in the auxiliary \UFL instance is identical to that in the original \FLM instance, i.e.,
    $
        \textstyle
        \connufl(\xtilde)
        = \sum_{i \in F} \sum_{e \in \Mstar} d'(i, e) \, \xtilde_{i, e} 
        = \sum_{i \in F} \sum_{e \in E} d(i, e) \, \xtilde_{i, e}
        = \connflm(\xtilde).
    $
    We can therefore bound the expected total cost of our solution as follows:
    \begin{align}
        &
        \textstyle  \E \Big[ \sum_{i \in S} f_i + \sum_{e \in \Mstar} d(\sigma(e), e) \Big]
        = \E \Big[\sum_{i \in S} f_i + \sum_{e \in \Mstar} d'(\sigma(e), e) \Big]
        \nonumber\\&
        \textstyle \quad \leq \lambda \cdot \open(\ytilde) + \Big(1 + \frac{2}{e^\lambda} \Big) \cdot \connufl(\xtilde) 
        \nonumber\\& 
        \textstyle \quad \leq \max \Big\{ 2 \lambda, 3 \Big(1 + \frac{2}{e^\lambda} \Big) \Big\} \cdot (\open(\ystar) + \connflm(\xstar) ), \label{ineq:maxmat:main:costbound}
    \end{align}
    where the first inequality is due to the bifactor-approximation LP-rounding algorithm.
    This also implies that the integrality gap of \eqref{lp:flm} is bounded by the same factor.
\end{proof}

We can prove Theorem~\ref{thm:maxmat:main:perfmat} using almost the same argument with small modification where \eqref{ineq:maxmat:main:reroutebound} is replaced with 
$\open(\ytilde) \leq \open(\ystar)$ and $\connflm(\xtilde) \leq 2 \cdot \connflm(\xstar)$, and hence, the right-hand side of \eqref{ineq:maxmat:main:costbound} can be substituted by 
$
    \textstyle \max \{ \lambda, 2 (1 + \frac{2}{e^\lambda} ) \} \cdot (\open(\ystar) + \connflm(\xstar) ).
$
\section{2.218-approximation algorithm for \FLM with perfectly matchable compatibility graph} \label{sec:perfmat}

In the previous section, we present a $2.373$-approximation algorithm for \flm where the compatibility graph is perfectly matchable.
In this section, we present another algorithm guaranteeing an improved approximation factor of $2.218$ for this case.

\paragraph{Algorithm description.}
To describe this algorithm, we need a few more notation.
For a subset $S \subseteq F$ of facilities, let
$d(S, j) := \min_{i \in S} \{ d(i, j) \}$ for $j \in V$, and $d(S, e) := \min_{i \in S} \{ d(i, e) \}$ for $e \in E$.
Next, for any $x \in \R^{F \times E}$, 
$x_{i,j} := \sum_{e \in \delta(j)} x_{i, e}$ for $i \in F$ and $j \in V$; 
recall that $x_e := \sum_{i \in F} x_{i, e}$ for $e \in E$.

We are now ready to describe the algorithm; the pseudocode of this algorithm is provided in Algorithm~\ref{alg:perfectflm}. 
Given an \FLM instance $(F, V, E, f, d)$, we start with solving \eqref{lp:flm} to obtain an optimal solution $(\xstar, \ystar) \in \R^{F \times E}_+ \times \R^F_+$.
We then construct $( \{\xstar_{i ,j}\}_{i \in F, j \in V}, \ystar) \in \R^{F \times V}_+ \times \R^F_+$, and run a $(\lambda, 1 + \frac{2}{e^\lambda})$-approximation LP-rounding algorithm for \UFL with this constructed solution on an auxiliary \UFL instance $(F, V, f, d)$ for some $\lambda \geq 1.678$ to be chosen.
Let $S \subseteq F$ be the set of open facilities.
We then compute a minimum-cost perfect matching $\Mstar_S \subseteq E$ in the compatibility graph $G=(V,E)$ where the cost of each edge $e \in E$ is defined as $d(S, e)$.
Let $\sigma_S : \Mstar_S \to S$ be defined as $\sigma_S (e) = \argmin_{i \in S} d(i, e)$ for every $e \in \Mstar_S$.
We then return $(S, \Mstar_S, \sigma_S)$.

\begin{algorithm}[t]
\caption{A 2.218-approximation algorithm for \FLM with a perfectly matchable compatibility graph}
 \label{alg:perfectflm}
 \begin{algorithmic}[1]

    \State Solve \eqref{lp:flm} and obtain an optimal solution  $(\xstar, \ystar)$
    \State Run the $(\lambda, 1 + \frac{2}{e^\lambda})$-approximation LP-rounding algorithm for \UFL with $(\{\xstar_{ij}\}_{i \in F, j \in V}, \ystar)$ on $(F, V, f, d)$ to obtain $S \subseteq F$
    \State Compute a minimum-cost perfect matching $\Mstar_S$ in $G$ with cost being $d(S, \cdot)$
    \State Let $\sigma_S(e) := \argmin_{i \in S} d(i, e)$ for $e \in \Mstar_S$
    \State \Return $(S, \Mstar_S, \sigma_S)$
\end{algorithmic}
\end{algorithm}

\paragraph{Analysis.}
The following lemmas will be useful in proving the main theorem.
\begin{lemma} \label{lem:perfmat:uflapprox}
    The constructed solution $(\{\xstar_{i ,j}\}_{i \in F, j \in D}, \ystar)$ is feasible to \eqref{lp:ufl} with respect to $(F, V, f, d)$, and its total assignment cost is the same as the total assignment cost incurred by $(\xstar, \ystar)$ in the given \FLM instance $(F, V, E, f, d)$, i.e., $\connufl(\{\xstar_{i, j}\}_{i \in F, j \in V}) = \connflm(\xstar)$.
\end{lemma}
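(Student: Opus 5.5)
We check the three constraint families of \eqref{lp:ufl} for the vector $\xstar_{i,j} := \sum_{e \in \delta(j)} \xstar_{i,e}$, and then verify the cost identity by swapping the order of summation.

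\emph{Nonnegativity and the opening constraint.} Nonnegativity \eqref{lp:ufl:nonneg} is immediate from \eqref{lp:flm:nonneg}. The opening constraint \eqref{lp:ufl:open}, namely $\xstar_{i,j} \le \ystar_i$, is exactly constraint \eqref{lp:flm:flow} of \eqref{lp:flm}, so nothing further is needed.

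\emph{The assignment constraint.} For \eqref{lp:ufl:assign}, fix $j \in V$ and compute
\[
\textstyle \sum_{i \in F} \xstar_{i,j} = \sum_{e \in \delta(j)} \sum_{i \in F} \xstar_{i,e} = \sum_{e \in \delta(j)} \xstar_e .
\]
Since we are in the perfectly matchable setting of this section, $\{\xstar_e\}_{e\in E} \in \PMM(G)$ satisfies the equality degree constraint \eqref{const:ppm:degree} by the second half of Proposition~\ref{prop:prelim:pmm}. Hence the sum equals $1$.

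This is the one step that actually uses a hypothesis beyond plain LP feasibility. If one prefers the first description of $\PMM(G)$, the same conclusion follows in two moves. Summing the degree inequalities \eqref{const:pmm:degree} over all $v \in V$ gives $2\nu = |V|$ on the left-hand side and at most $|V|$ on the right. Equality overall then forces every individual degree constraint to be tight.

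\emph{The cost identity.} Expand the definition and regroup by edges:
\[
\textstyle \connufl(\{\xstar_{i,j}\}) = \sum_{i}\sum_{j} d(i,j) \sum_{e\in\delta(j)} \xstar_{i,e} = \sum_i \sum_{e=\{j,k\}\in E} \big(d(i,j)+d(i,k)\big)\xstar_{i,e} .
\]
The second equality holds because each edge $e=\{j,k\}$ lies in exactly $\delta(j)$ and $\delta(k)$. By definition $d(i,j)+d(i,k) = d(i,e)$, so the last expression is $\connflm(\xstar)$.

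The only point requiring care is the assignment constraint, where perfect matchability is essential; the remaining steps are routine bookkeeping.
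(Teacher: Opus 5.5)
Your proposal is correct and follows the paper's proof essentially step for step: the assignment constraint comes from the equality degree constraint \eqref{const:ppm:degree} of $\PMM(G)$ in the perfectly matchable case, the opening and nonnegativity constraints come directly from \eqref{lp:flm:flow} and \eqref{lp:flm:nonneg}, and the cost identity follows by regrouping the double sum over edges. Your extra remark, which derives tightness of every degree constraint from \eqref{const:pmm:size} with $\nu = |V|/2$, is a valid supplement that the paper does not spell out.
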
 
\begin{proof}
    We first show the feasibility of the constructed solution. For \eqref{lp:ufl:assign}, observe that
    \[
        \textstyle 
        \sum_{i \in F} \xstar_{i, j} 
        = \sum_{i \in F} \sum_{e \in \delta(j)} \xstar_{i, j}
        = 1,
    \]
    where the second equality follows from \eqref{const:ppm:degree} of \eqref{lp:flm:pmm}.
    Note also that \eqref{lp:ufl:open} and \eqref{lp:ufl:nonneg} are satisfied due to \eqref{lp:flm:flow} and \eqref{lp:flm:nonneg}, respectively.
    
    The second half of the statement can be shown as follows:
    \begin{align*}
        \connufl(\{\xstar_{i, j}\}_{i \in F, j \in V})
        &
        \textstyle = \sum_{i \in F} \sum_{j \in V} d(i, j) \xstar_{i, j}
        \\&
        \textstyle = \sum_{i \in F} \sum_{j \in V} \sum_{e \in \delta(j)} d(i, j) \, \xstar_{i, e}
        \\&
        \textstyle = \sum_{i \in F} \sum_{e = \{j, k\} \in E} (d(i, j) + d(i, k)) \xstar_{i, e}
        \\&
        = \connflm(\xstar)
    \end{align*}
\end{proof}

\begin{lemma} \label{lem:perfmat:dSebound}
    For $S \subseteq F$ and $e = \{j, k\} \in E$, $d(S, e) \leq d(e) + d(S, j) + d(S, k)$.
\end{lemma}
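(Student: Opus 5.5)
Let $i_j \in S$ attain $d(S, j)$ and $i_k \in S$ attain $d(S, k)$. Since $d(S, e) = \min_{i \in S} d(i, e)$, it is enough to find one facility in $S$ whose distance to $e$ is at most the right-hand side. The natural choice is the better of $i_j$ and $i_k$. We bound the costs of routing $e$ to each of them and then compare.

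For $i_j$ we have $d(i_j, e) = d(i_j, j) + d(i_j, k)$. The triangle inequality gives $d(i_j, k) \le d(i_j, j) + d(j, k)$. Hence
\[
  d(i_j, e) \le 2\, d(S, j) + d(e).
\]
Symmetrically,
\[
  d(i_k, e) \le 2\, d(S, k) + d(e).
\]

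Therefore
\[
  d(S, e) \le \min\{d(i_j, e), d(i_k, e)\} \le d(e) + 2 \min\{d(S, j), d(S, k)\} \le d(e) + d(S, j) + d(S, k),
\]
because twice the minimum of two nonnegative numbers is at most their sum. Equivalently, one can average the two bounds instead of taking the minimum.

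The only point needing care is this last step. Bounding by just one of the two facilities would give $2\, d(S,j)$ rather than $d(S,j) + d(S,k)$. Choosing the closer of the two nearest facilities, or averaging over them, is what removes the factor $2$. Everything else is a direct application of the triangle inequality of $d$.
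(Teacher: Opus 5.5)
Your proof is correct and matches the paper's argument. The paper also picks the nearest facilities $i_j$ and $i_k$, bounds $d(S,e)$ by $2d(i_j,j)+d(j,k)$ and by $2d(i_k,k)+d(j,k)$ using the triangle inequality, and combines the two bounds (effectively averaging them) to get $d(e)+d(S,j)+d(S,k)$.
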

\begin{proof}
    Let $i_j := \argmin_{i \in S} d(i, j)$ and $i_k := \argmin_{i \in S} d(i, k)$.
    Since $i_j, i_k \in S$, we have
    \begin{align*}
        d(S, e) & \leq d(i_j, j) + d(i_j, k) \leq 2d(i_j, j) + d(j, k); \\
        d(S, e) & \leq d(i_k, j) + d(i_k, k) \leq 2d(i_k, k) + d(j, k),
    \end{align*}
    yielding that
    \[
        d(S, e) \leq d(j, k) + d(i_j, j) + d(i_k, k) = d(e) + d(S, j) + d(S, k).
    \]
\end{proof}

\begin{lemma} \label{lem:perfmat:solconbound}
    The assignment cost of the output is bounded by 
    $\sum_{e \in E} d(S, e) \, \xstar_e.$
\end{lemma}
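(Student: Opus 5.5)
The plan is to compare the chosen matching $\Mstar_S$ against the fractional perfect matching $\{\xstar_e\}_{e \in E}$ under the edge cost $d(S,\cdot)$. First, I will note that the output's assignment cost equals $\sum_{e \in \Mstar_S} d(\sigma_S(e), e) = \sum_{e \in \Mstar_S} d(S, e)$. This holds because $\sigma_S(e)$ is defined as the minimizer of $d(i,e)$ over $i \in S$, so each matched pair pays exactly $d(S,e)$. It therefore suffices to show
\[
\textstyle \sum_{e \in \Mstar_S} d(S,e) \leq \sum_{e \in E} d(S,e)\, \xstar_e .
\]

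Next, I will invoke the polyhedral description of perfect matchings. By \eqref{lp:flm:pmm}, the vector $\{\xstar_e\}_{e \in E}$ lies in $\PMM(G)$. Since $G$ is perfectly matchable, the maximum matchings of $G$ are exactly its perfect matchings, so $\PMM(G)$ is the perfect matching polytope. Proposition~\ref{prop:prelim:concom} then lets me write $\xstar_e = \sum_{M \in \calM} \gamma_M \chi^M_e$ as a convex combination of perfect matchings. (Only existence of this decomposition is needed, not the polynomial-time computation.)

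Linearity then gives
\[
\textstyle \sum_{e \in E} d(S,e)\,\xstar_e = \sum_{M \in \calM} \gamma_M \sum_{e \in M} d(S,e) \geq \min_{M \in \calM} \sum_{e \in M} d(S,e) = \sum_{e \in \Mstar_S} d(S,e).
\]
The last equality uses that $\Mstar_S$ is a minimum-cost perfect matching with respect to the costs $d(S,\cdot)$, and that the coefficients $\gamma_M$ are nonnegative and sum to one.

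I do not expect a real obstacle. The one point needing care is confirming that $\calM$ here consists of perfect matchings, so that $\Mstar_S$ ranges over the same set as the support of the decomposition. This follows from $\nu = |V|/2$.
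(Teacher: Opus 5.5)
Your proposal is correct and follows the paper's proof: the equality comes from the definition of $\sigma_S$, and the inequality from $\Mstar_S$ being a minimum-cost perfect matching together with $\{\xstar_e\}_{e \in E} \in \PMM(G)$. You also spell out the convex-combination step that the paper leaves implicit. That step needs only the definition of $\PMM(G)$ as a convex hull, not Proposition~\ref{prop:prelim:concom}.
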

\begin{proof}
    Observe that
    \[
        \sum_{e \in \Mstar_S} d(\sigma_S(e), e)
        = \sum_{e \in \Mstar_S} d(S, e)
        \leq \sum_{e \in E} d(S, e) \, \xstar_e,
    \]
    where the equality is due to the definition of $\sigma_S$, and the inequality comes from that $\Mstar_S$ is a minimum-cost perfect matching with cost being $d(S, \cdot)$ and $\{\xstar_e\}_{e \in E} \in \PMM(G)$ due to \eqref{lp:flm:pmm} of \eqref{lp:flm}.
\end{proof}

We are now ready to prove the main theorem of this section:
\begin{theorem}\label{thm:perfmat:main}
    This algorithm is a $\max\{\lambda, 2 + \frac{2}{e^\lambda}\}$-approximation algorithm for $\lambda \geq 1.678$.
    The integrality gap of \eqref{lp:flm} is at most $\max\{\lambda, 2 + \frac{2}{e^\lambda}\}$ when the compatibility graph is perfectly matchable for $\lambda \geq 1.678$.
\end{theorem}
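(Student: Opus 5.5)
Combine Lemmas~\ref{lem:perfmat:uflapprox}, \ref{lem:perfmat:dSebound} and \ref{lem:perfmat:solconbound}, and take expectations over the randomness of the bifactor LP-rounding algorithm.

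First bound the total cost pointwise for a fixed $S$. By Lemma~\ref{lem:perfmat:solconbound}, the assignment cost of the output is at most $\sum_{e \in E} d(S, e)\, \xstar_e$. Applying Lemma~\ref{lem:perfmat:dSebound} to each $e = \{j,k\}$ splits this into $\sum_{e\in E} d(e)\,\xstar_e$ plus $\sum_{e = \{j,k\}} (d(S,j) + d(S,k))\,\xstar_e$.

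For the first part, Lemma~\ref{lem:prelim:distpairbound} gives $d(e) \le d(i,e)$ for every $i$, and $\xstar_e = \sum_i \xstar_{i,e}$, so $\sum_e d(e)\,\xstar_e \le \connflm(\xstar)$.

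For the second part, regroup by client: it equals $\sum_{j \in V} d(S,j) \sum_{e\in\delta(j)} \xstar_e$. Constraint \eqref{const:ppm:degree} gives $\sum_{e\in\delta(j)} \xstar_e = 1$ for every $j$, so this is exactly $\sum_{j \in V} d(S,j)$. That is the assignment cost of the \UFL solution $(S, \cdot)$ on the auxiliary instance $(F, V, f, d)$, since each client goes to its nearest open facility.

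Now take expectations. By Lemma~\ref{lem:perfmat:uflapprox}, the constructed solution is feasible to \eqref{lp:ufl}, and its assignment cost equals $\connflm(\xstar)$. Proposition~\ref{prop:prelim:bifactor} then gives
\[
\E\Big[\textstyle\sum_{i\in S} f(i) + \sum_{j\in V} d(S,j)\Big] \le \lambda\,\open(\ystar) + \big(1+\tfrac{2}{e^\lambda}\big)\connflm(\xstar).
\]
The one subtle point is that this bound applies to the sum of opening cost and \UFL connection cost together, not to each term separately. This is fine: our output's total cost is at most $\sum_{i\in S} f(i) + \sum_j d(S,j) + \connflm(\xstar)$ pointwise, so we can add the deterministic $\connflm(\xstar)$ term after taking expectations. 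The expected cost is therefore at most
\[
\lambda\,\open(\ystar) + \big(2+\tfrac{2}{e^\lambda}\big)\connflm(\xstar) \le \max\{\lambda, 2+\tfrac{2}{e^\lambda}\}\cdot\mathrm{OPT}_{LP}.
\]

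Feasibility of the output $(S, \Mstar_S, \sigma_S)$ needs two checks. A perfect matching is a maximum matching, and $\Mstar_S$ exists because $G$ is perfectly matchable. Also $S \neq \emptyset$, since $V \neq \emptyset$ forces the \UFL rounding to open some facility. The running time is polynomial because \eqref{lp:flm} is solvable via Proposition~\ref{prop:prelim:sep} and minimum-cost perfect matching is polynomial.

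Finally, $\mathrm{OPT}_{LP}$ lower-bounds the integral optimum, so the same factor bounds the approximation ratio. An integral solution of at most this expected cost exists, so the same factor also bounds the integrality gap. I do not expect a real obstacle.
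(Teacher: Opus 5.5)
Your proposal is correct and follows the paper's proof essentially step for step. You bound the assignment cost by $\sum_{e} d(S,e)\,\xstar_e$ via Lemma~\ref{lem:perfmat:solconbound}, split it with Lemma~\ref{lem:perfmat:dSebound} into a part bounded by $\connflm(\xstar)$ and a part equal to $\sum_{j\in V} d(S,j)$ by the degree constraint, and then apply the bifactor guarantee together with Lemma~\ref{lem:perfmat:uflapprox}, exactly as the paper does (your extra checks on feasibility, $S \neq \emptyset$, and running time are harmless additions the paper leaves implicit).
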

When we choose $\lambda = 2.218$, this theorem implies that both the approximation factor and the integrality gap of \eqref{lp:flm} are (at most) $2.218$ in the case where a perfectly matchable compatibility graph is given.
\begin{proof}[Proof of Theorem~\ref{thm:perfmat:main}]
    The expected total cost incurred by the algorithm's solution is bounded by
    \begin{align}
        & 
        \textstyle \E \Big[ \sum_{i \in S} f(i) + \sum_{e \in \Mstar_S} d(\sigma_S(e), e) \Big]
        \leq \E \Big[ \sum_{i \in S} f(i) + \sum_{e \in E} d(S, e) \xstar_e \Big]
        \nonumber \\&
        \textstyle \quad \leq \E \Big[ \sum_{i \in S} f(i)
        + \underbrace{\textstyle \sum_{e \in E} d(e) \xstar_e}_{\textrm{(I)}}
        + \underbrace{\textstyle \sum_{\{j, k\} \in E} (d(S, j) + d(S, k)) \xstar_{\{j, k\}}}_{\textrm{(II)}} \Big], \label{ineq:perfmat:main:01}
    \end{align}
    where the first inequality follows from Lemma~\ref{lem:perfmat:solconbound} and the second from Lemma~\ref{lem:perfmat:dSebound}.
    We can further bound (I) as follows:
    \[
        \textstyle 
        \textrm{(I)}
        = \sum_{i \in F} \sum_{e \in E} d(e)  \xstar_{i, e}
        \leq \sum_{i \in F} \sum_{e \in E} d(i, e)  \xstar_{i, e}
        = \connflm(\xstar),
    \]
    where the inequality is due to Lemma~\ref{lem:prelim:distpairbound}.
    In the meanwhile, (II) can be rephrased as follows:
    \[
        \textstyle
        \textrm{(II)}
        = \sum_{j \in V} d(S, j) \Big[ \sum_{e \in \delta(j)} \xstar_e \Big]
        = \sum_{j \in V} d(S, j),
    \]
    where the second equality is due to \eqref{const:ppm:degree} of \eqref{lp:flm:pmm}.

    We can thus further expand \eqref{ineq:perfmat:main:01} as follows:
    \begin{align*}
        & 
        \textstyle \E \Big[ \sum_{i \in S} f(i) + \sum_{e \in \Mstar_S} d(\sigma_S(e), e) \Big]
        \\&
        \textstyle \quad \leq \E \Big[ \sum_{i \in S} f(i) + \sum_{j \in V} d(S, j) \Big] + \connflm(\xstar)
        \\&
        \quad \leq \lambda \cdot \open(\ystar) + \big( {\textstyle 1 + \frac{2}{e^\lambda}} \big) \cdot \connufl(\{\xstar_{i, j}\}_{i \in F, j \in V}) + \connflm(\xstar)
        \\&
        \quad = \lambda \cdot \open(\ystar) + \big( {\textstyle 2 + \frac{2}{e^\lambda} } \big) \cdot \connflm(\xstar).
        \\&
        \quad \leq \max \big\{\lambda,{\textstyle 2 + \frac{2}{e^\lambda} } \big\} \cdot \big( \open(\ystar) + \connflm(\xstar) \big),
    \end{align*}
    where the second inequality and the equality follow from Lemma~\ref{lem:perfmat:uflapprox}.
    This also proves the same bound on the integrality gap of \eqref{lp:flm} when the compatibility graph is perfectly matchable.
\end{proof}
\section{Discussion and future directions} \label{sec:concl}
An obvious direction is to improve the approximation ratio for \flm.
In particular, compared to the factor attained for the perfectly matchable case, the approximation ratio of our algorithm for the general case is substantially higher.
This gap is mainly due to the multiplicative factors on the opening and connection costs in Lemma~\ref{lem:maxmat:reroute}.

Unfortunately, there are some fundamental limitations in our approach.
We first argue that the factor on the opening cost should be at least $\frac{3}{2}$.
Given a triangle as the compatibility graph, consider a fractional solution where each compatible pair is serviced $\frac{1}{3}$ by a facility opened $\frac{2}{3}$.
It is easy to see that this solution is feasible to \eqref{lp:flm}.
However, when we reroute it into one supported by any maximum matching, the facility must be integrally open, showing that the factor on the opening cost should be at least $\frac{3}{2}$.
We believe that this $\frac{3}{2}$ is indeed the correct factor for the opening cost.

On the other hand, the factor $3$ on the connection cost is also ``locally'' tight in the analysis.
For example, given three clients $v_1, v_2, v_3$ that are collinear and uniformly spaced in that order, if we reroute the assignment of $\{v_1, v_2\}$ serviced by a facility co-located at  $v_1$ into $\{v_2, v_3\}$, the connection cost increases by three times.
To overcome this analytical limitation, it would be interesting to find a way of bounding the increase of the connection cost in an amortized sense.

Many match-making systems would prefer high service quality for user experience in a modest tradeoff in the matching size.
This situation can be modeled by a robust or prize-collecting setting.
To this end, it is interesting to design a greedy primal-dual algorithm \cite{jain2001approximation,jain2002new} because such primal-dual algorithms have proven to be useful in related settings such as robust clustering~\cite{charikar2001algorithms,jain2003greedy} and  partial covering~\cite{gandhi2001approximation}.
In this direction, we believe that Hungarian methods would play a central role since \FLM generalizes \textsc{Minimum-cost Maximum Matching}.
A challenge is however in bypassing the monotone dual growth, which is not the case in Hungarian methods in general.

Lastly, it would also be very interesting to study \FLM under dynamic/online models or to extend the matching constraint to a more general $k$-packing constraint.

\section*{Acknowledgement}
This work is supported by Polish National Science Centre grant 2020/39/B/ST6/01641 and grant 2022/45/B/ST6/00559.

\bibliographystyle{plain}
\bibliography{ref}

\appendix
\section{Deferred proof of Proposition~\ref{prop:prelim:pmm}} \label{app:deferpf}

To prove the first half of Proposition~\ref{prop:prelim:pmm}, we show the following:
\begin{lemma} 
    For any graph $G = (V, E)$, if $z \in \R^E$ satisfies \eqref{const:pmm:degree}-\eqref{const:pmm:nonneg}, $z$ can be represented by a convex combination of maximum matchings in $G$, implying that $z \in \PMM(G)$.
\end{lemma}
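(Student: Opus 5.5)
We reduce to Edmonds' matching polytope theorem, which we take as known: the convex hull of all (not necessarily maximum) matchings of $G$ is exactly the set of $z \in \R^E$ satisfying \eqref{const:pmm:degree}, \eqref{const:pmm:oddset}, and \eqref{const:pmm:nonneg}. Call this polytope $P_{\mathrm{M}}(G)$. The extra constraint \eqref{const:pmm:size} then cuts out a face, and the aim is to show that this face is precisely $\PMM(G)$.

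First, for every matching $N$ we have $\sum_{e \in E} \chi^N_e = |N| \le \nu$. Hence the linear inequality $\sum_{e\in E} z_e \le \nu$ is valid for every vertex of $P_{\mathrm{M}}(G)$. By convexity it is valid for all of $P_{\mathrm{M}}(G)$.

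Now take $z$ satisfying \eqref{const:pmm:degree}--\eqref{const:pmm:nonneg}. Then $z \in P_{\mathrm{M}}(G)$, so by Edmonds' theorem we can write $z = \sum_{N} \gamma_N \chi^N$ as a convex combination of matchings with $\gamma_N > 0$. Summing coordinates and using \eqref{const:pmm:size}, we get
\[
\nu = \sum_{e \in E} z_e = \sum_N \gamma_N |N| \le \sum_N \gamma_N \nu = \nu .
\]
Every term satisfies $|N| \le \nu$ and the weights are positive, so equality forces $|N| = \nu$ for every $N$ in the support. That is, every matching used in the combination is a maximum matching, and therefore $z \in \conv\{\chi^M : M \in \calM\} = \PMM(G)$.

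The converse inclusion (not required by the lemma, but needed for the proposition) is immediate. Each $\chi^M$ with $M \in \calM$ satisfies \eqref{const:pmm:degree}--\eqref{const:pmm:nonneg}, since $M$ is a matching of size $\nu$, and the solution set of these constraints is convex.

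The only substantive ingredient is Edmonds' matching polytope theorem. If citing it is not acceptable, I would prove it by the standard vertex argument. Take a vertex $z$ and pick a maximal laminar family of tight odd-set and degree constraints. Then apply induction on $|V| + |E|$, shrinking a tight odd set $U$ with $3 \le |U| \le |V|-3$ and gluing the decompositions of the two contracted graphs along the edges of $\delta(U)$. This inductive step is where I expect the real difficulty to lie; the reduction above is routine.
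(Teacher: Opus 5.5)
Your proof is correct and is essentially the paper's argument: both rest on the fact that \eqref{const:pmm:size} forces every matching in a convex decomposition of $z$ to have size exactly $\nu$. The paper just routes this through the doubled graph $\Gtilde$ and the perfect matching polytope, counting $|\Mtilde \cap E''|$ instead of $|N|$. That doubling is the standard reduction behind the matching polytope theorem you cite, and it is also the natural way to make your optional fallback work, since the shrink-and-glue argument you sketch is really the proof for perfect matchings.
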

\begin{proof}
    We follow the standard technique: making a copy of $G$, denoted by $G' := (V', E')$, and creating an edge between each vertex $v \in V$ with its copy $v' \in V'$.
    We denote by $E'' := \{\{v, v'\}\}_{v \in V}$ these edges connecting $V$ and $V'$.
    Observe that $\Gtilde := (V \cup V', E \cup E' \cup E'')$ is perfectly matchable.
    
    Consider $\ztilde \in \R^{E \cup E' \cup E''}$ defined as follows:
    \[
        \ztilde_f := \begin{cases}
            z_e, & \text{if $f = e \in E$}; \\
            z_e, & \text{if $f = e' \in E'$}; \\
            1 - \sum_{e \in \delta(v)} z_e, & \text{if $f = \{v, v'\} \in E''$}.
        \end{cases}
    \]
    As $z$ satisfies \eqref{const:pmm:degree}, \eqref{const:pmm:oddset}, and \eqref{const:pmm:nonneg}, it is known that $\ztilde$ satisfies \eqref{const:ppm:degree}, \eqref{const:ppm:oddset}, and \eqref{const:ppm:nonneg} (see, e.g., Corollary 25.1a of Schrijver~\cite{schrijver2003combinatorial}).
    We can therefore see that $\ztilde$ can be represented by a convex combination of perfect matchings in $\Gtilde$.
    Let $\gammatilde \in \R^{\calM(\Gtilde)}_+$ be its coefficient vector.

    We further argue that, for any $\Mtilde \in \calM(\Gtilde)$ with $\gammatilde_{\Mtilde} > 0$, we must have $|\Mtilde \cap E''| = |V| - 2 \nu(G)$.
    Observe first that we have
    \[
        \textstyle 
        \sum_{f \in E''} \ztilde_f
        = \sum_{v \in V} \Big( 1 - \sum_{e \in \delta(v)} z_e \Big)
        = |V| - 2 \sum_{e \in E} z_e
        = |V| - 2 \nu(G),
    \]
    where the last equality follows from \eqref{const:pmm:size}.
    Therefore, if there exists a perfect matching $\Mtilde$ in $\Gtilde$ with $\gammatilde_{\Mtilde} > 0$ and $|\Mtilde \cap E''| > |V| - 2 \nu(G)$,
    there must exist another perfect matching $\Mtilde'$ in $\Gtilde$ with $\gammatilde_{\Mtilde'} > 0$ and $|\Mtilde' \cap E''| < |V| - 2 \nu(G)$.
    However, this is impossible since $\Mtilde' \cap E$ is a matching in $G$ of size $|\Mtilde' \cap E| = \frac{|V| - |\Mtilde' \cap E''|}{2} > \nu(G)$.
    We can therefore conclude that, for any $\Mtilde \in \calM(\Gtilde)$ with $\gammatilde_{\Mtilde} > 0$, $\Mtilde \cap E$ is a maximum matching in $G$, and hence, we can losslessly reduce $\gammatilde$ on $\R^{\calM(G)}_+$, completing the proof of the lemma.
\end{proof}
\section{\Reroute for perfectly matchable compatibility graph} \label{app:perfmat:reroute}
\paragraph{Subroutine description.}
Let us only mention the steps modified from the subroutine for the general case presented in Section~\ref{sec:maxmat:reroute}; see Algorithm~\ref{alg:perfmat:reroute} for a pseudocode of this modified subroutine.
Note that, as $G$ is now perfectly matchable, we can always find an alternating cycle $C$ in $M \mathbin{\triangle} M'$.
Then, for each $e' \in M' \cap C$, instead of transferring $\varepsilon$ of $\xtilde_{i_{e'}, e'}$ to only one of the adjacent edges (which is the case in the general version), we split $\varepsilon$ in half and transfer this $\frac{\varepsilon}{2}$ to both adjacent edges, respectively; more precisely, we update
\[
    \textstyle
    \xtilde_{i_{e'}, e'} \gets \xtilde_{i_{e'}, e'} - \varepsilon,
    \;
    \xtilde_{i_{e'}, e_1} \gets \xtilde_{i_{e'}, e_1} + \frac{\varepsilon}{2},
    \text{ and }
    \xtilde_{i_{e'}, e_2} \gets \xtilde_{i_{e'}, e_2} + \frac{\varepsilon}{2},
\]
where we denote by $e_1, e_2 \in M \cap C$ the adjacent edges with $e' \in M' \cap C$ in cycle $C$.
Once we finish rerouting the assignment $\xtilde$, the final output is $(\xtilde, y)$.
This is the end of the modification.

\begin{algorithm}[t]
\caption{$\Reroute((x, y), M)$ for a perfectly matchable compatibility graph} \label{alg:perfmat:reroute}
\begin{algorithmic}[1]
    \State $\xtilde \gets x$
    \State Let $\gamma \in \R^\calM_+$ be the coefficient vector of a decomposition of $\{ \xtilde_e \}_{e \in E}$ into a convex combination of perfect matchings
    \While{$\gamma_M < 1$}
        \State Let $M' \neq M$ be any perfect matching with $\gamma_{M'} > 0$.
        \State Find any alternating cycle $C$ in $M \mathbin{\triangle} M'$
        \For{each $e' \in M' \cap C$}
            \State Choose a facility $i_{e'}$ with $\tilde{x}_{i_{e'}, e'} > 0 $
        \EndFor
        \State $\varepsilon \gets \min\{\gamma_{M'}, \min_{e' \in M' \cap C} \{\tilde{x}_{i_{e'}, e'}\}\}$
        \For{each $e' \in M' \cap C$}
            \State Let $e_1, e_2 \in M \cap C$ be the adjacent edges with $e'$ in $C$
            \State $\xtilde_{i_{e'}, e'} \gets \xtilde_{i_{e'}, e'} - \varepsilon$
            \State $\xtilde_{i_{e'}, e_1} \gets \xtilde_{i_{e'}, e_1} + \frac{\varepsilon}{2}$
            \State $\xtilde_{i_{e'}, e_2} \gets \xtilde_{i_{e'}, e_2} + \frac{\varepsilon}{2}$
        \EndFor
        \State $\gamma_{M'} \gets \gamma_{M'} - \varepsilon$
        \State $\gamma_{M' \mathbin{\triangle} P} \gets \gamma_{M' \mathbin{\triangle} P} + \varepsilon$
    \EndWhile
    \State \Return $(\tilde{x}, y)$
\end{algorithmic}  
\end{algorithm}

\paragraph{Analysis.}
We restate the lemma we prove in this appendix.
\begin{lemma}[cf. Lemma~\ref{lem:perfmat:reroute}] \label{lem:app:perfmat:reroute}
    For a perfectly matchable compatibility graph $G = (V, E)$, this modified \Reroute returns in polynomial time a solution $(\xtilde, \ytilde)$ satisfying
    \begin{enumerate}
        \item $(\xtilde, \ytilde)$ is feasible to \eqref{lp:flm};
        \item $\xtilde$ is supported by only $M$;
        \item $\connflm(\xtilde) \leq \connflm(x) + \sum_{e \in M} d(e)$;
        \item $\open(\ytilde) \leq \open(y)$.
    \end{enumerate}
\end{lemma}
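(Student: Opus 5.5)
I will mirror the three-part analysis of the general \Reroute (Lemmas~\ref{lem:maxmat:reroute:feas}, \ref{lem:maxmat:reroute:suppM}, \ref{lem:maxmat:reroute:costbound}), adapting each to the halved, two-sided transfer along alternating cycles. Since $G$ is perfectly matchable, every maximum matching is perfect, so by Proposition~\ref{prop:prelim:pmm} every $\{\xtilde_e\}_{e\in E}$ we maintain satisfies $\sum_{e\in\delta(v)}\xtilde_e=1$ for all $v$. Also, $M\mathbin{\triangle}M'$ consists only of alternating cycles, so the cycle $C$ always exists. In the last line of the pseudocode I read $M'\mathbin{\triangle}P$ as $M'\mathbin{\triangle}C$.

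\emph{Support and running time.} I will reuse the potential $\Phi$ from Lemma~\ref{lem:maxmat:reroute:suppM} unchanged. Entries $\xtilde_{i,e'}$ with $e'\notin M$ only decrease. If $\varepsilon=\gamma_{M'}$, then $|(M'\mathbin{\triangle}C)\setminus M|<|M'\setminus M|$. Otherwise some positive pair $(i,e')$ with $e'\notin M$ vanishes. In both cases $\Phi$ strictly decreases, which gives polynomial time and, at termination, support on $M$.

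\emph{Feasibility with $\ytilde=y$.} This is the step I expect to be the main obstacle, because now no factor $2$ is allowed. Constraint \eqref{lp:flm:pmm} holds via $\gamma$ as before. For \eqref{lp:flm:flow}, fix $i$ and a client $j$, matched in $M$ to $k$ via $e=\{j,k\}$. I will track the load $L_i(v):=\sum_{e''\in\delta(v)}\xtilde_{i,e''}$ for every vertex $v$ and show it never increases. In one step, each $e'\in M'\cap C$ loses $\varepsilon$ from $i_{e'}$. Its two endpoints $u,w$ are also endpoints of the neighbouring $M$-edges $e_1\ni u$ and $e_2\ni w$, and those edges each gain $\varepsilon/2$ from $i_{e'}$.

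Consider vertex $j$. It lies on exactly one $M'$-edge of $C$ (call it $e'_j$) and on exactly one $M$-edge of $C$ (namely $e$). The contributions to $L_i(j)$ are:
\begin{itemize}
\item $-\varepsilon\,\I[i_{e'_j}=i]$ from $e'_j$;
\item $+\frac{\varepsilon}{2}\I[i_{e'_j}=i]$ from $e'_j$ feeding $e$;
\item $+\frac{\varepsilon}{2}\I[i_{e'_k}=i]$ from the $M'$-edge $e'_k$ at $k$ feeding $e$.
\end{itemize}
The net change is $\frac{\varepsilon}{2}(\I[i_{e'_k}=i]-\I[i_{e'_j}=i])$, which can be positive, so a per-step argument fails.

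I will therefore track the paired quantity $L_i(j)+L_i(k)$ over $e=\{j,k\}\in M$, which by the same count changes by $0$ or a negative amount. At termination $L_i(j)=L_i(k)=\xtilde_{i,e}$, so
\[
\xtilde_{i,e}=\tfrac12\bigl(L_i(j)+L_i(k)\bigr)\le \tfrac12\bigl(L^{0}_i(j)+L^{0}_i(k)\bigr)\le y_i,
\]
using \eqref{lp:flm:flow} for the input $x$. Here $L^0_i$ denotes the load under the input $x$. This establishes feasibility of the output, which is what the lemma asks for; I would not claim feasibility at every intermediate step.

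\emph{Cost.} Let $\tau^i_{e'\to e}$ be the total amount moved from $e'$ to $e\in N(e')$ via facility $i$. I will use \eqref{ineq:maxmat:reroute:costbound:dbound} in a sharper two-sided form. With $e'=\{u,w\}$, $e_1=\{u,a\}$ and $e_2=\{w,b\}$,
\[
d(i,e_1)+d(i,e_2)\le d(i,u)+d(i,a)+d(i,w)+d(i,b)\le 2d(i,e')+d(e_1)+d(e_2),
\]
by the triangle inequality through $u$ and $w$. Because every transfer is split equally, $\tau^i_{e'\to e_1}$ and $\tau^i_{e'\to e_2}$ accumulate symmetrically. Each unit leaving $e'$ therefore costs at most $\frac12\bigl(d(i,e_1)+d(i,e_2)\bigr)\le d(i,e')+\frac12\bigl(d(e_1)+d(e_2)\bigr)$.

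Summing, the $d(i,e')$ terms reproduce $\connflm(x)$ restricted to edges outside $M$. Edge $e\in M$ picks up $\frac12 d(e)$ for every unit arriving from either side. That total inflow is $\sum_{i}\sum_{e'\in N(e)}\tau^i_{e'\to e}\le 2\xtilde_e\le 2$, because inflow splits over $e$'s two endpoints and each endpoint's $M'$-edge sends at most $1$. Hence the added cost is at most $\sum_{e\in M}d(e)$, giving $\connflm(\xtilde)\le\connflm(x)+\sum_{e\in M}d(e)$. Finally $\open(\ytilde)=\open(y)$ holds trivially.
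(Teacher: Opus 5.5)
Your proof is correct and follows the paper's overall structure: the same potential $\Phi$ for support and running time, and the same cost accounting, in which every unit of $x_{i,e'}$ with $e'=\{u,w\}\notin M$ is eventually split evenly between the two fixed $M$-edges at $u$ and $w$. That split is the paper's identity \eqref{eq:app:perfmat:reroute:id01}.

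The one genuine difference is how you get the flow constraint. The paper reads off the closed form \eqref{eq:app:perfmat:reroute:id02}, $\xtilde_{i,e}=\frac12\bigl(L^0_i(j)+L^0_i(k)\bigr)$, from that end-to-end transfer accounting. You instead prove a per-iteration invariant: $L_i(j)+L_i(k)$ never increases. In fact it is exactly preserved, since the $\pm\frac{\varepsilon}{2}$ imbalances at $j$ and $k$ cancel. You then combine this with support on $M$ at termination. The two routes are equivalent, because your invariant plus termination gives \eqref{eq:app:perfmat:reroute:id02} with equality. Yours makes explicit why the two endpoints of an $M$-edge must be paired, and it correctly notes that \eqref{lp:flm:flow} can fail mid-run. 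The paper's is more economical, since the same identity then drives the cost bound.

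Your two-sided inequality $d(i,e_1)+d(i,e_2)\le 2d(i,e')+d(e_1)+d(e_2)$ is exactly what step (III) in the proof of Lemma~\ref{lem:app:perfmat:reroute:feas02} needs. Applying the one-sided \eqref{ineq:maxmat:reroute:costbound:dbound} to each neighbour would only give $2d(i,e')$ in place of $d(i,e')$.

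There is one bookkeeping slip in your final count. The charge $\frac12 d(e)$ is per unit \emph{leaving} a neighbouring edge, and the quantity bounded by $2$ is $\sum_{i}\sum_{e'\in N(e)}x_{i,e'}=2(1-x_e)$. In terms of your $\tau$, each unit actually arriving at $e$ is charged $d(e)$, and $\sum_{i}\sum_{e'\in N(e)}\tau^i_{e'\to e}=\xtilde_e-x_e\le 1$. Either way the added cost is at most $\sum_{e\in M}d(e)(1-x_e)\le\sum_{e\in M}d(e)$, so your conclusion is unaffected.
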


The next two lemmas can be shown using the same arguments in the proofs of Lemmas~\ref{lem:maxmat:reroute:feas} and \ref{lem:maxmat:reroute:suppM}, respectively.
\begin{lemma} \label{lem:app:perfmat:reroute:feas01}
    In any iteration, $\xtilde$ satisfies \eqref{lp:flm:pmm} and \eqref{lp:flm:nonneg} of \eqref{lp:flm}.
\end{lemma}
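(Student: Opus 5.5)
We need to show that in any iteration of the modified \Reroute, $\xtilde$ satisfies \eqref{lp:flm:pmm} and \eqref{lp:flm:nonneg}. The plan is to prove by induction over iterations that two invariants hold at the start of every iteration: (a) $\xtilde \ge 0$ entrywise, and (b) $\gamma \ge 0$, $\sum_{M''} \gamma_{M''} = 1$, $\gamma$ is supported on perfect matchings, and $\xtilde_e = \sum_{M''} \gamma_{M''} \chi^{M''}_e$ for every $e \in E$. Invariant (b) directly gives $\{\xtilde_e\}_{e \in E} \in \PMM(G)$, i.e.\ \eqref{lp:flm:pmm}. Initially both hold because $\xtilde = x$ is feasible and, by Proposition~\ref{prop:prelim:concom}, $\gamma$ is a convex decomposition. (We read $\gamma_{M' \mathbin{\triangle} P}$ in the pseudocode as $\gamma_{M' \mathbin{\triangle} C}$.)

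For nonnegativity, the only decreased entries are $\xtilde_{i_{e'}, e'}$, each by $\varepsilon$. By the choice of $\varepsilon$ we have $\varepsilon \le \xtilde_{i_{e'}, e'}$, and each $e' \in M' \cap C$ is decreased exactly once, so these entries stay nonnegative. Similarly $\gamma_{M'}$ decreases by $\varepsilon \le \gamma_{M'}$, and $\gamma_{M' \mathbin{\triangle} C}$ increases, so $\gamma \ge 0$ is preserved and $\sum \gamma$ is unchanged. Moreover, $M' \mathbin{\triangle} C$ is again a perfect matching, because $C$ is an alternating cycle of $M \mathbin{\triangle} M'$ and swapping it keeps every vertex covered exactly once. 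This cycle always exists since $M \neq M'$ are both perfect matchings.

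The main step is checking that the edge marginals track $\gamma$. On the $\gamma$ side, the update changes $\sum_{M''} \gamma_{M''} \chi^{M''}_e$ by $-\varepsilon$ for every $e \in M' \cap C$, by $+\varepsilon$ for every $e \in M \cap C$, and by $0$ for all other edges. On the $\xtilde$ side, each $e' \in M' \cap C$ loses exactly $\varepsilon$. Each $e \in M \cap C$ is adjacent in the cycle to exactly two edges of $M' \cap C$ and receives $\varepsilon/2$ from each, for a total gain of $\varepsilon$. No other edge changes. Hence $\xtilde_e = \sum_{i} \xtilde_{i,e}$ changes by exactly the same amount as the $\gamma$-marginal, and invariant (b) is maintained. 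The one subtle point is that this "exactly two neighbours" property needs $C$ to be a genuine cycle, which holds in the perfectly matchable case. It also needs $|C| \ge 4$; this is true because $M$ and $M'$ share no edges on $C$, and when $|C| = 4$ the two neighbours are distinct edges.
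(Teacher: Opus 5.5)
Your proof is correct and is essentially the paper's argument: the paper (via the proof of Lemma~\ref{lem:maxmat:reroute:feas}) attributes both properties to the choice $\varepsilon = \min\{\gamma_{M'}, \min_{e' \in M' \cap C} \xtilde_{i_{e'}, e'}\}$. You additionally spell out the bookkeeping the paper leaves implicit: each edge of $M \cap C$ receives $\varepsilon/2$ from each of its two distinct neighbours in $C$, so the rerouted edge marginals coincide with those of the updated convex combination $\gamma$.
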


\begin{lemma} \label{lem:app:perfmat:reroute:suppM}
    At termination, $\xtilde$ is supported by $M$.
    Moreover, \Reroute runs in polynomial time.
\end{lemma}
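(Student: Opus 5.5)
The plan is to reuse the potential-function argument from the proof of Lemma~\ref{lem:maxmat:reroute:suppM} almost verbatim, with alternating paths replaced by alternating cycles. I use the same potential
\[
\Phi(\xtilde,\gamma) := (|E|+1)\cdot|\{(i,e')\in F\times(E\setminus M) : \xtilde_{i,e'}>0\}| + \textstyle\sum_{M'\in\calM:\gamma_{M'}>0}|M'\setminus M|.
\]
It is a nonnegative integer and is initially $O(|F|\cdot|E|^2)$ by Proposition~\ref{prop:prelim:concom}.

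First I would check that every iteration is well defined and preserves the invariant $\xtilde_e=\sum_{M'}\gamma_{M'}\chi^{M'}_e$.
\begin{itemize}
\item If $\gamma_M<1$, some perfect matching $M'\neq M$ has $\gamma_{M'}>0$. Since both $M$ and $M'$ are perfect, $M\mathbin{\triangle}M'$ is a nonempty union of alternating cycles, so a cycle $C$ exists.
\item For each $e'\in M'\cap C$ we have $\xtilde_{e'}\ge\gamma_{M'}>0$, so a facility $i_{e'}$ with $\xtilde_{i_{e'},e'}>0$ exists.
\item In a cycle, each $e\in M\cap C$ is adjacent to exactly two edges of $M'\cap C$. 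Hence $e$ receives $\frac{\varepsilon}{2}+\frac{\varepsilon}{2}=\varepsilon$, while each $e'\in M'\cap C$ loses exactly $\varepsilon$.
\item This matches the update $\gamma_{M'}\gets\gamma_{M'}-\varepsilon$ and $\gamma_{M'\mathbin{\triangle}C}\gets\gamma_{M'\mathbin{\triangle}C}+\varepsilon$, where $M'\mathbin{\triangle}C$ is again a perfect matching. So the invariant holds and Lemma~\ref{lem:app:perfmat:reroute:feas01} applies.
\end{itemize}

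Next I would show that $\Phi$ strictly decreases in every iteration.
\begin{itemize}
\item Amounts are only ever transferred onto edges of $M$. So for $e'\notin M$, $\xtilde_{i,e'}$ never increases, and the first term never increases.
\item If $\varepsilon=\gamma_{M'}$, the matching $M'$ leaves the support and at most $M'\mathbin{\triangle}C$ enters it. Because $C\cap M'\subseteq M'\setminus M$ is nonempty, $|(M'\mathbin{\triangle}C)\setminus M|<|M'\setminus M|$, so the second term drops.
\item Otherwise $\varepsilon=\xtilde_{i_{e'},e'}$ for some $e'\in M'\setminus M$. That tuple vanishes, so the first term drops by at least $|E|+1$, while the second term grows by at most $|E|$.
\end{itemize}
Thus the number of iterations is $O(|F|\cdot|E|^2)$. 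Each iteration (finding the cycle, choosing the facilities, updating) is polynomial, and the support of $\gamma$ stays polynomially bounded because it is at most the second term of $\Phi$.

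Finally, at termination $\gamma_M=1$, so $\xtilde_{e}=0$ for every $e\notin M$. Since $\xtilde_e=\sum_i\xtilde_{i,e}$ and $\xtilde\ge0$, this forces $\xtilde_{i,e}=0$ for all $i\in F$ and $e\notin M$, so $\xtilde$ is supported by $M$.

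The only step needing care beyond the general case is the exact-$\varepsilon$ bookkeeping under halved transfers. This is where the cycle structure matters: each $M$-edge of $C$ has exactly two $M'$-neighbours in $C$, which guarantees that the received mass equals $\varepsilon$ and hence that the invariant matches the $\gamma$ update.
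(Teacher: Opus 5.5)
Your proof is correct and follows the paper's approach: the paper proves this lemma simply by invoking the same potential-function argument as in Lemma~\ref{lem:maxmat:reroute:suppM}, with the identical $\Phi$ and the same two-case analysis on $\varepsilon$. Your extra check is a detail the paper leaves implicit, and it is exactly what the reduction needs. Because each $M$-edge of the alternating cycle has two $M'$-neighbours there, the halved transfers still deliver exactly $\varepsilon$ to it, so the decomposition invariant stays consistent with the $\gamma$ update.
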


The next lemma is crucial to get improvement over the general case.
\begin{lemma} \label{lem:app:perfmat:reroute:feas02}
    At termination, we have
    \begin{itemize}
        \item $(\xtilde, y)$ satisfies \eqref{lp:flm:flow} of \eqref{lp:flm};
        \item $\connflm(\xtilde) \leq \connflm(x) + \sum_{e \in M} d(e)$.
    \end{itemize}    
\end{lemma}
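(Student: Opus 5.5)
Both claims are proved by tracking transfer amounts, in the spirit of Lemma~\ref{lem:maxmat:reroute:costbound}. For $i \in F$, $e' \in E \setminus M$ and $e \in N(e')$, let $\transfer$ be the total amount serviced by $i$ that the modified subroutine moves from $e'$ to $e$. In every iteration an edge $e' \in M' \cap C$ sends exactly half of its decrement to each of its two neighbours in $C$. Because $M$ is perfect and $e' \notin M$, the edge $e'$ has exactly two neighbours in $M$, namely the $M$-edges at its two endpoints, and these are always the two neighbours of $e'$ in any alternating cycle of $M \mathbin{\triangle} M'$ containing it. Hence $\tau^i_{e'\to e_1} = \tau^i_{e'\to e_2}$ throughout, where $e_1,e_2$ are the two $M$-edges adjacent to $e'$. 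By Lemma~\ref{lem:app:perfmat:reroute:suppM}, $\xtilde$ is supported by $M$ at termination, so $x_{i,e'} = \tau^i_{e'\to e_1} + \tau^i_{e'\to e_2}$ and each transfer equals $x_{i,e'}/2$.

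\textbf{Flow constraint.} Fix $i \in F$ and $j \in V$, and let $e = \{j,k\} \in M$ be the $M$-edge at $j$. Since $\xtilde$ is supported by $M$,
\[
\textstyle \sum_{e \in \delta(j)} \xtilde_{i,e} = \xtilde_{i,\{j,k\}} = x_{i,\{j,k\}} + \tfrac12\sum_{e' \in \delta(j)\setminus M} x_{i,e'} + \tfrac12\sum_{e' \in \delta(k)\setminus M} x_{i,e'} .
\]
The first two terms together are at most $\sum_{e\in\delta(j)} x_{i,e}$ with the $e$-term counted fully, and similarly for $k$. So the right-hand side is at most $\tfrac12\sum_{e\in\delta(j)}x_{i,e} + \tfrac12\sum_{e\in\delta(k)}x_{i,e} \le y_i$, using \eqref{lp:flm:flow} for $(x,y)$ at $j$ and at $k$. (The edge $\{j,k\}$ contributes $x_{i,\{j,k\}}$ in total, split as halves between the two sums.)

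\textbf{Cost bound.} Write $e'=\{j,k\}$, $e_1=\{k,\ell\}$ and $e_2=\{j,m\}$. By the triangle inequality,
\begin{align*}
d(i,e_1) &\le d(i,k)+d(i,j)+d(j,k)+d(e_1),\\
d(i,e_2) &\le d(i,j)+d(i,k)+d(j,k)+d(e_2).
\end{align*}
Averaging gives $\tfrac12(d(i,e_1)+d(i,e_2)) \le d(i,e') + d(e') + \tfrac12(d(e_1)+d(e_2))$. This is weaker than needed: the $d(e')$ term would produce the factor $2$ again. The key improvement is to use the better-oriented bound for each endpoint:
\[
d(i,\ell) \le d(i,k)+d(k,\ell), \qquad d(i,m) \le d(i,j)+d(j,m).
\]
Then
\[
d(i,e_1)+d(i,e_2) = d(i,k)+d(i,\ell)+d(i,j)+d(i,m) \le 2d(i,k)+2d(i,j)+d(e_1)+d(e_2) = 2d(i,e')+d(e_1)+d(e_2),
\]
so $\tfrac12(d(i,e_1)+d(i,e_2)) \le d(i,e') + \tfrac12(d(e_1)+d(e_2))$.

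Summing over $i$ and $e'$ gives
\[
\textstyle \connflm(\xtilde) \le \connflm(x) + \sum_{e\in M} d(e)\sum_i\sum_{e'\in N(e)}\tau^i_{e'\to e}.
\]
The inner double sum is at most $\xtilde_e \le 1$, exactly as in step (IV) of Lemma~\ref{lem:maxmat:reroute:costbound}.

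\textbf{Main obstacle.} The delicate point is the symmetry claim $\tau^i_{e'\to e_1}=\tau^i_{e'\to e_2}$. It needs every rerouting step on $e'$ to use a cycle with $e_1$ and $e_2$ as its neighbours. This holds because perfect matchings make $M \mathbin{\triangle} M'$ a union of alternating cycles, so every vertex of $e'$ is covered by $M$ and both of its $M$-neighbours lie in $C$.
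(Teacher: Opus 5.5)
Your proof is correct and follows the paper's route. Both arguments use the explicit formula $\xtilde_{i,e} = x_{i,e} + \frac12\sum_{e' \in N(e)} x_{i,e'}$ for $e \in M$, average the flow constraints of $(x,y)$ at the two endpoints of $e$, and finish with a pairwise triangle-inequality bound on the rerouted cost. Your endpoint-oriented bound $d(i,e_1)+d(i,e_2) \le 2d(i,e') + d(e_1) + d(e_2)$ is exactly what the paper's step (III) requires. The inequality \eqref{ineq:maxmat:reroute:costbound:dbound} that the paper cites there only yields $2d(i,e')$ after averaging, as you observed yourself, so your derivation is the careful version of that step.
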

\begin{proof}
    By Lemma~\ref{lem:app:perfmat:reroute:suppM}, $\xtilde$ is supported by only $M$ at termination.
    Moreover, at every iteration, the subroutine transfers each half of the decrement by $\varepsilon$ on $\xtilde_{i_{e'}, e'}$ for each $e' \in M' \cap C$ towards the adjacent edges in $M \cap C$, respectively.
    We can therefore observe that, at termination, for any $i \in F$ and $e = \{j, k\} \in M$, exactly $\frac{1}{2} \sum_{e' \in \delta(j) \setminus e} x_{i, e'}$ and $\frac{1}{2} \sum_{e' \in \delta(k) \setminus e} x_{i, e'}$ are transferred into $\xtilde_{i, e}$ from the edges in $\delta(j) \setminus e$ and $\delta(k) \setminus e$, respectively.
    In other words, we have
    \begin{align}
        \xtilde_{i, e}
        &
        \textstyle = x_{i, e} + \frac{1}{2} \Big[ \sum_{e' \in \delta(j) \setminus e} x_{i, e'} + \sum_{e' \in \delta(k) \setminus e} x_{i, e'}  \Big] \label{eq:app:perfmat:reroute:id01}
        \\&
        \textstyle = \frac{1}{2} \Big[ \sum_{e' \in \delta(j)} x_{i, e'} + \sum_{e' \in \delta(k)} x_{i, e'}  \Big]. \label{eq:app:perfmat:reroute:id02}
    \end{align}

    The first statement can be shown as follows: for any $i \in F$ and $j \in V$ with $e = \{j, k\} \in M$,
    \[
        \textstyle \sum_{e'' \in \delta(j)} \xtilde_{i, e''}
        = \xtilde_{i, e}
        =  \frac{1}{2} \Big[ \sum_{e' \in \delta(j)} x_{i, e'} + \sum_{e' \in \delta(k)} x_{i, e'}  \Big]
        \leq y_i,
    \]
    where the first equality is due to the fact that $\xtilde$ is supported by $M$, the second eqaulity is due to \eqref{eq:app:perfmat:reroute:id02}, and the inequality follows from that the initial input $(x, y)$ satisfies \eqref{lp:flm:flow} of \eqref{lp:flm}.

    We now turn to proving the second statement. Recall that, for any $e' \in E \setminus M$, we use $N(e')$ to denote the adjacent matched edges by $M$.
    We can then observe:
    \begin{align*}
        &
        \connflm(\xtilde)
        = \sum_{i \in F} \sum_{e \in E} d(i, e) \, \xtilde_{i, e}
        \\&
        \quad\stackrel{\textrm{(I)}}{=} \sum_{i \in F} \sum_{e \in M} d(i, e) \, \xtilde_{i, e}
        \\&
        \quad \stackrel{\textrm{(II)}}{=} \sum_{i \in F} \sum_{e=\{j,k\} \in M} d(i, e) \, \Big[ x_{i, e} + {\textstyle \frac{1}{2}} \Big[ \sum_{e' \in \delta(j) \setminus e} x_{i, e'} + \sum_{e' \in \delta(k) \setminus e} x_{i, e'}  \Big] \Big]
        \\&
        \quad = \sum_{i \in F} \sum_{e \in M} d(i, e) \, x_{i, e} + \sum_{i \in F} \sum_{e' \in E \setminus M} {\textstyle \frac{\sum_{e \in N(e')} d(i, e)}{2}} \, x_{i, e'}
        \\&
        \quad \stackrel{\textrm{(III)}}{\leq} \sum_{i \in F} \sum_{e \in M} d(i, e) \, x_{i, e} + \sum_{i \in F} \sum_{e' \in E \setminus M} {\textstyle \big( d(i, e') + \frac{\sum_{e \in N(e')} d(e)}{2} \big)} \, x_{i, e'}
        \\&
        \quad = \sum_{i \in F} \sum_{e \in E} d(i, e) \, x_{i, e}
        + \sum_{e = \{j, k\} \in E} d(e) \sum_{i \in F} {\textstyle \frac{1}{2}} \Big[ \sum_{e' \in \delta(j) \setminus e} x_{i, e'} + \sum_{e' \in \delta(k) \setminus e} x_{i, e'}  \Big]
        \\&
        \quad \stackrel{\textrm{(IV)}}{\leq} \sum_{i \in F} \sum_{e \in E} d(i, e) \, x_{i, e}
        + \sum_{e \in M} d(e) \, \xtilde_e
        \\&
        \quad = \connflm(x) + \sum_{e \in M} d(e),
    \end{align*}
    where (I) follows from the fact that $\xtilde$ is fully supported by $M$, (II) from \eqref{eq:app:perfmat:reroute:id01}, (III) from \eqref{ineq:maxmat:reroute:costbound:dbound}, and (IV) again from \eqref{eq:app:perfmat:reroute:id01}.
\end{proof}

Then, the proof of Lemma~\ref{lem:app:perfmat:reroute} can be derived by Lemmas~\ref{lem:app:perfmat:reroute:feas01}, \ref{lem:app:perfmat:reroute:suppM}, and \ref{lem:app:perfmat:reroute:feas02}.
\end{document}